\documentclass[12pt,a4paper]{article}
\usepackage[margin=2.2cm]{geometry}
\usepackage{authblk}

\usepackage[breaklinks]{hyperref}
\usepackage{booktabs}
\usepackage{newtxtext,newtxmath}

\usepackage{amsthm,mathbbol}
\numberwithin{equation}{section}
\newtheorem{theorem}{Theorem}
\newtheorem{proposition}[theorem]{Proposition}

\newtheorem{corollary}[theorem]{Corollary}

\newtheorem*{remark*}{Remark}

\numberwithin{theorem}{section}

  
%
\usepackage{tikz}
\usetikzlibrary{matrix,arrows,patterns,cd}
\usepackage{leftidx}
\usepackage{amsmath,amssymb,amsfonts,mathbbol}
\usepackage{slashed}
\newcommand{\bbLbrack}{[\kern-0.4em{[}\,}
\newcommand{\bbRbrack}{\,]\kern-0.4em{]}}

\usepackage{extarrows}

\usepackage{mathrsfs,latexsym}
\usepackage[mathscr]{eucal}

\newcommand{\II}{{\mathbb{1}}}

\newcommand{\CC}{{\mathbb C}}

\newcommand{\RR}{{\mathbb R}}

\newcommand{\NN}{{\mathbb N}}


\newcommand{\CoinX}[1]{C_0^\infty({#1})}

\newcommand{\fc}{\textit{fc}}
\newcommand{\pc}{\textit{pc}}



\newcommand{\Ac}{{\mathcal A}}
\newcommand{\Bc}{{\mathcal B}}
\newcommand{\Cc}{{\mathcal C}}

\newcommand{\Uc}{{\mathcal U}}
\newcommand{\Xc}{{\mathcal X}}






\newcommand{\Prob}{{\rm Prob}}

\DeclareMathOperator{\supp}{supp}

\renewcommand{\Re}{{\rm Re}\,}

\newcommand{\im}{{\rm im}\,}

\newcommand{\dvol}{d\textrm{vol}}

\newcommand{\Lb}{{\boldsymbol{L}}}
\newcommand{\Mb}{{\boldsymbol{M}}}
\newcommand{\Nb}{{\boldsymbol{N}}}







\newcommand{\Af}{{\mathscr A}}

\newcommand{\Bf}{{\mathscr B}}

\newcommand{\If}{{\mathscr I}}

\DeclareMathOperator{\Sol}{Sol}


\newcommand{\id}{{\rm id}}

\newcommand{\Var}{{\rm Var}}

\newcommand{\sym}{{\rm sym}}

\makeatletter
\newcommand{\raisemath}[1]{\mathpalette{\raisem@th{#1}}}
\newcommand{\raisem@th}[3]{\raisebox{#1}{$#2#3$}}
\makeatother




%
\begin{document}

\def\utilde#1{\mathord{\vtop{\ialign{##\crcr
   $\hfil\displaystyle{#1}\hfil$\crcr\noalign{\kern1.5pt\nointerlineskip}
   $\hfil\tilde{}\hfil$\crcr\noalign{\kern1.5pt}}}} {}}

\title{Quantum fields and local measurements} 

%
%
%
%

\author[1]{Christopher J. Fewster\thanks{\tt chris.fewster@york.ac.uk}}
\affil{Department of Mathematics,
	University of York, Heslington, York YO10 5DD, United Kingdom.}
\author[2]{Rainer Verch\thanks{\tt rainer.verch@uni-leipzig.de}}
\affil{Institute for Theoretical Physics, University of Leipzig, 04009 Leipzig, Germany.}

\maketitle 
%
%
\begin{abstract}   
The process of quantum measurement is considered in the algebraic framework of quantum field theory on curved spacetimes. Measurements are carried out on one quantum field theory, the ``system'', using another, the ``probe''. The measurement process involves a dynamical coupling of ``system'' and ``probe'' within a bounded spacetime region. The resulting ``coupled theory'' determines
a scattering map on the uncoupled combination of the ``system'' and ``probe'' by reference to natural ``in'' and ``out'' spacetime regions. No specific interaction is assumed and all constructions are local and covariant.

Given any initial state of the probe in the ``in'' region, the scattering map 
determines a completely positive map from ``probe'' observables in the ``out'' region to ``induced system observables'', thus providing a measurement scheme for the latter. It is shown that the induced system observables may be localized in the causal hull 
of the interaction coupling region and are typically less sharp than the probe observable, but more sharp than the actual measurement on the coupled theory.
Post-selected states conditioned on measurement outcomes are obtained using Davies--Lewis instruments that depend on the initial probe state. Composite measurements involving causally ordered coupling regions are also considered. Provided that the scattering map obeys a causal factorization property, the causally ordered composition of the individual instruments coincides with the composite instrument;
in particular, the instruments may be combined in either order if the coupling regions are causally disjoint. This is the central consistency property of the proposed framework. 

The general concepts and results are illustrated by an example in which both ``system'' and ``probe'' are quantized linear scalar fields, coupled by a quadratic interaction term with compact spacetime support. System observables induced by simple probe observables are calculated exactly, for sufficiently weak coupling, and compared with first order perturbation theory. 
\end{abstract} 

\bigskip
\noindent\emph{Dedicated to the memory of Paul Busch.}  

\section{Introduction}\label{sec:intro} 

This paper combines ideas and methods from algebraic quantum field theory (AQFT) and quantum measurement theory (QMT) in order to provide improved operational foundations for the measurement theory of relativistic quantum fields in (possibly curved) spacetimes. The aim is to provide a framework that is both conceptually clear and amenable to practical computations. In so doing, we bridge a gap between these subjects that has, surprisingly, lain open for a long time, despite its clear 
relevance to important discussions concerning the Unruh and Hawking effects~\cite{Unruh:1976,Hawking:1975}. 
On one hand, algebraic quantum field theory~\cite{Haag} is founded on the idea of algebras of observables associated with local regions of spacetime. However, not much attention has been given to how these observables can actually be measured. On the other hand, quantum measurement theory~\cite{Busch_etal:quantum_measurement} provides an operational understanding of measurement schemes, in which a probe system is used to measure a quantum observable of the system of interest. However these discussions are not usually framed in a spacetime context. By contrast, this paper will introduce a generally covariant formalism of measurement schemes adapted to algebraic quantum field theory in curved spacetimes, illustrated by a specific model that can be analysed in detail.  

The main work on measurement of local observables of which we are aware is due to Hellwig and Kraus~\cite{HellwigKraus:1969,HellwigKraus:1970,HellwigKraus_prd:1970}. 
In \cite{HellwigKraus_prd:1970},
one of their main points of focus was the question of where a state-reduction might be considered to occur in a relativistic model, 
given that the instantaneous reductions of quantum mechanics break manifest Lorentz covariance; they did not discuss probes as such 
but simply took as their starting-point the standard measurement-induced state reduction as described by L\"uders' rule~\cite{Busch2009} along with the locality 
and covariance of the quantum field theory (QFT).
In the refs.\ \cite{HellwigKraus:1969,HellwigKraus:1970}, Hellwig and Kraus considered a quantum (field) system dynamically coupled to an apparatus (or probe), assuming that the 
dynamics can be described in an interaction picture by a unitary $S$-matrix, 
with assumed locality properties, from which they inferred locality properties of the field observables. Some of 
these results can be seen as forerunners to ours, however, our framework is more general in several respects, as our approach is not restricted to Minkowski spacetime
and our assumptions on the dynamics are more general and do not require the existence of a unitary $S$-matrix describing the dynamics of the interaction between a 
quantum system and a probe (or apparatus); furthermore, the probes we discuss are physical systems in spacetime which allows 
us to address their localisation as well. We also benefit from various more recent developments in QMT. 
More recent work discussing the measurement process in quantum field theory includes~\cite{OkamuraOzawa,DoplicherQFM}, though neither
reference models the interaction between system and probe; the same is true of the 
insightful review of Peres and Terno~\cite{PeresTerno:2004}. 

Since the work of Hellwig and Kraus, there has been much progress in both QFT and quantum measurement theory. In particular, the operational approach to quantum measurement has been developed in considerable depth and
detail~\cite{DaviesLewis:1970,Davies_QTOS:1976,Ozawa:1984,Busch_etal:quantum_measurement}. Progress in QFT over the same period has brought many advances both in its 
phenomenology and its mathematical and conceptual underpinnings; in particular, the entire subject of QFT in curved spacetimes (QFT in CST) has developed to a mature state.  
Further understanding and development of QFT in CST brings with it the need to accommodate the description of measurement process in a covariant spacetime context.
For instance, there are varying interpretations of the famous Unruh effect~\cite{Unruh:1976,DeWitt:1979}, which also has a relation to Hawking radiation from black 
holes~\cite{Hawking:1975}, bearing on the question of what a particle detector is and what a particle might be.
The traditional approach to these questions centres on the Unruh-deWitt detector, in which a quantum mechanical probe system is coupled to the quantum field. 
If the detector executes uniform accelerated motion, then the probe system is known to become excited; see~\cite{Unruh:1976} and~\cite{deBievreMerkli:2006} for a deep, 
general and rigorous account. However the behaviour of the probe is not, to our knowledge, ever analysed in terms of statements concerning local observables of the quantum field itself.\footnote{As the final version of this paper was prepared, we became aware of a thesis by Smith~\cite{Smith_thesis} (written contemporaneously with our work) which considers the specific question of what observables of a field are measured by an Unruh-deWitt detector, although without developing the general framework that we will provide here. We thank ARH Smith for bringing his work to our attention.}
(In a complementary approach \cite{FredenhagenHaag:1987,FredenhagenHaag:1990}, certain observables of the quantum field are used to describe the Unruh effect and the 
Hawking effect and they are referred to as ``detector observables'', but without any coupling of the quantum field to a probe system as in the references cited before.)
 In the light of recent discussions concerning the thermal nature of the
Unruh effect~\cite{BuchholzSolveen:2013,BucVer_macroscopic:2015,BuchholzVerch:2016}, it seems desirable to establish a clear and systematic account of how probes may be used to
measure local properties of a quantum field together with their relation to observables of the quantum field.
This is what we will do, intending that our discussion will be accessible to workers in both QFT and quantum measurement theory. 

To be clear on what we do not do:  we do not attempt to discuss measurement in quantum gravity, but consider a fixed, possibly curved spacetime in the sense of macroscopic physics. 
We also do not claim to solve the measurement problem of quantum theory. Rather, we take it for granted that the experimenter has some means of preparing, controlling and measuring 
the probe and sufficiently separating it from the QFT of interest -- which we will call the `system' -- the question is what measurements of the probe tell us about the system. That is, our interest is in describing a link in the measurement chain, in a covariant spacetime context.
We also do not attempt to prove that all local observables of a QFT can, in fact, be measured using a suitable probe 
(see \cite{Ozawa:1984,OkamuraOzawa} and literature cited there for results in this direction for quantum mechanics and QFT in flat spacetime). 

We can now describe what we will do. After some brief preliminaries, 
we set out, in section~\ref{sec:genscheme}, a general framework in which two physical systems, the `system' and the `probe', may be coupled 
together in a fashion suitable for measurement. In particular, the probe and system are prepared in known states $\sigma$ and $\omega$ at 
`early times', during which they are uncoupled; they are again uncoupled at `late times', during which an observable $B$ of the probe is measured. 
Here, the coupling is taken to be effective only in a compact spacetime `coupling region', while `early times' and `late times' refer to covariantly defined spacetime regions. As we show, the expected value of the resulting measurement coincides with the expected value of an \emph{induced system observable} $\varepsilon_\sigma(B)$ in a hypothetical measurement in the system state $\omega$. However, the variance of the actual measurement typically exceeds that of the hypothetical measurement, due to detector fluctuations.
Under reasonable assumptions concerning the coupling we show that $\varepsilon_\sigma(B)$ may be localised in suitable neighbourhoods of 
the causal hull of the coupling region, regardless of what the localisation of $B$ is and where the measurement reading is taken. However, if $B$ may be localised in the causal complement of $K$, the induced system observable is a multiple of the unit, from which no information concerning the system may be extracted.
We also give an account of \emph{effect valued measures} (EVMs) in this framework and explain how joint measurements of probe EVMs at spacelike separation can provide natural examples of joint unsharp measurements of non-commuting system observables.

Next, we discuss selective and non-selective measurements, introducing the concept of a pre-instrument as the map that sends system states to post-selected states conditioned on the observation of an effect. (The term `post-selected' is used in various different ways in the literature -- the precise meaning we have in mind, which amounts to updating the state based on the measurement outcome, will be spelled out in detail.) In particular, we show that, at spacelike separation from the coupling region, the original and post-selected states agree only on observables that are uncorrelated, in the original state, with the system observable induced by the measured probe effect.
As QFT states typically exhibit correlations even at spacelike separation, it
is clear that every spacetime region typically contains observables whose expectation values differ in the two states. By analysing successive measurements with couplings in causally ordered regions we show that the post-selection may be performed sequentially in any valid causal order, or in a combined single stage, with the same outcome. In particular, where the coupling regions are causally disjoint the post-selection may be performed in either order.
Elsewhere~\cite{BostelmannFewsterRuep:2020}, this analysis is used to cast light on the `impossible measurements' raised many years ago by Sorkin~\cite{sorkin1993impossible}.

We also briefly discuss the significance of geometric and internal symmetries in our framework, although more could certainly be said on both those subjects. Here we show that if there is a global gauge group acting on both system and probe, under which the coupling transforms covariantly, then gauge invariant probe observables induce gauge invariant system observables. Turning this around, gauge noninvariant system quantities can only be measured using gauge-breaking couplings or probes. On the geometrical side, we point out that when a system state has a \emph{strong mixing}
cluster property under a time-translation symmetry, then the post-selected state becomes eventually indistinguishable from the original after an elapse of time.

An important aspect of our treatment is that it is amenable to concrete calculations.
Sections~\ref{sec:probe} and~\ref{sec:applic} are devoted to a specific system--probe model consisting of two free real scalar fields with a quadratic interaction between them with a spacetime dependent coupling factor of compact support. In order to prepare the ground for the general analysis of section~\ref{sec:genscheme} we now preview these results in some detail. Some fine points of precision
are suppressed in this description and we emphasise that our framework is not tied to this example but applies to general QFTs including those with self-interactions.

Consider two linear quantum fields $\Phi$ and $\Psi$ described by the uncoupled classical action 
\begin{equation}
S_0 = \frac{1}{2}\int_M \dvol \left((\nabla_a \Phi)(\nabla^a\Phi) - m_\Phi^2 \Phi^2 +  (\nabla_b \Psi)(\nabla^b\Psi) - m_\Psi^2 \Psi^2\right),
\end{equation}
where $m_\Phi$ and $m_\Psi$ are the masses of the two fields and $\Mb$ is a globally hyperbolic spacetime. In what follows, $\Phi$ will be the `system' field and $\Psi$ will be the `probe'. A coupling between them can be introduced by adding an interaction term
\begin{equation}
S_{\text{int}}= - \int_\Mb\dvol\, \rho \Phi  \Psi
\end{equation}
to the action, where $\rho$ is a real, smooth function with support contained in a compact set $K$. The Euler--Lagrange field equations for the uncoupled and coupled systems can be written respectively as
\begin{equation} 
\begin{pmatrix}
	P & 0 \\ 0 & Q
\end{pmatrix}\begin{pmatrix}
	\Phi\\ \Psi
\end{pmatrix}
=0,\qquad \begin{pmatrix}
P & R \\ R & Q
\end{pmatrix}\begin{pmatrix}
\Phi\\ \Psi
\end{pmatrix}
=0,
\end{equation}
where 
\begin{equation}
P=\Box_\Mb+m_\Phi^2, \qquad Q=\Box_\Mb+ m_\Psi^2, 
\end{equation}
and $R$ is the operation of multiplication by $\rho$.  

The two theories may be quantized by standard methods, introducing algebras  $\Uc(\Mb)$ generated by smeared fields $\Phi(f)$, $\Psi(h)$ for 
the uncoupled theory and an algebra $\Cc(\Mb)$ generated by $\Phi_{\text{int}}(f)$ and $\Psi_{\text{int}}(h)$ for the coupled one. Here $f$ and $h$ are
smooth compactly supported test functions. The generators obey various relations that will be spelled out in full later on. This way of modelling `system' and probe', and in particular, their coupling, is in the spirit of the `standard model' of a quantum measurement process as discussed in \cite{BuschLahti-StandModel}; it appears also in 
discussions of the Unruh effect taking as `probe' a quantum field on Minkowski spacetime \cite{Unruh:1976} or in a cavity \cite{GroveOttewill:1983}.

The coupled and uncoupled theories may be identified in the natural `in' and `out' regions $M^-$ and $M^+$, defined by $M^\pm=M\setminus J^\mp(K)$, where $J^{+/-}(K)$ denotes the causal future/past of $K$ (see section~\ref{sec:prelim}). Formally, this identification is
implemented by algebraic isomorphisms $\tau^\pm:\Uc(\Mb)\to\Cc(\Mb)$ so that
\begin{equation}
\tau^\pm\Phi(f) =\Phi_{\text{int}}(f) , \qquad
\tau^\pm\Psi(f) =\Psi_{\text{int}}(f) 
\end{equation}
for all test functions $f$ supported in $M^\pm$. These identifications can be compared by a scattering map 
\begin{equation}
\Theta=(\tau^{-})^{-1}\tau^+,
\end{equation}
which is an isomorphism of the uncoupled algebra to itself.

We consider a measurement of the coupled probe-system theory in a state $\varpi$ of $\Cc(\Mb)$ which has no correlations between the two theories at `early times', meaning that $(\tau^-)^*\varpi=\omega\otimes\sigma$. Here, $\omega$ and $\sigma$ are the states in which the system and probe have been individually prepared at early times. The measured observable is the smeared field $\Psi_{\text{int}}(h)$, where $h$ is supported in $M^+$. As $\Psi_{\text{int}}(h)=\tau^+\Psi(h)$, this measurement may be considered as an observation of the probe at `late times'.
	 
The expectation value of the measurement outcome is $\varpi(\Psi_{\text{int}}(h))$. Although the measurement is performed on the coupled system, one wishes to interpret the result as a measurement on the system itself. This is possible if there is a system observable $A$, depending perhaps on $\sigma$ but not on $\omega$, for which
\begin{equation}
\omega(A) = \varpi(\Psi_{\text{int}}(h))
\end{equation}
and so that $A$ is the unique observable with this property for all $\omega$. In this case $A$ will be called an \emph{induced system observable}. One of the goals of this paper is to show how induced system observables may be introduced in a general setting, to determine their localisation properties, and to compute them in the model described above. This computation, which makes use of the scattering map $\Theta$, shows that the system observable induced by $\Psi_{\text{int}}(h)$ is
\begin{equation}
A = \Phi(f^-) + \sigma(\Psi(h^-))\II
\end{equation}
for test functions $f^-$ and $h^-$, depending on $h$, so that $f^-$ and the difference $h^--h$ are supported in the intersection of coupling region $\supp\rho$ with the causal past of the support of $h$. The dependence of $A$ on $\sigma$ is to be expected; note also that if $h$ lies completely outside the causal future of $\rho$ then $A=\sigma(\Psi(h))\II$ is
a trivial observable, from which one can learn nothing about the system.

The compactness of $\supp f^-$ indicates that the observable $A$ is local. However, as we will argue, the observable $A$ has properties that are not
local to the support of $f^-$ (unless this set happens to be causally convex). 
Instead, we will show that $A$ can be appropriately localised within regions that
contain the \emph{causal hull} (sometimes called the causal completion) of $\supp f^-$; that is, the intersection of its causal future and past. For dynamical reasons any 
local observable is localisable in many regions -- in particular, in neighbourhoods of any Cauchy surface -- but the localisation close to the coupling region provides a particularly attractive physical picture of the measurement. Note that $\supp h$ may be located far from the localisation region of the induced observable.

Further analysis of this model appears in Secs.~\eqref{sec:probe} and~\eqref{sec:applic}. Among other things, we show that the scattering morphism satisfies the causal factorization property where multiple couplings are concerned, that the set of induced system observables forms a subalgebra of the algebra of smeared system fields, and that the results replicate those of first order perturbation theory in an appropriate limit. Section~\ref{sect:conc} gives some final remarks and the four appendices address technical points arising in the text.

\section{Preliminaries}
\label{sec:prelim}

\paragraph{Background on Lorentzian geometry}
A Lorentzian spacetime will be a smooth (Hausdorff, paracompact) manifold $M$ with at most finitely many connected components, equipped with a smooth
Lorentzian metric $g$ of signature $+-\cdots -$ and a choice of time-orientation, thus allowing all nonzero causal [timelike or null] vectors to be classified as 
future- or past- pointing. If $x\in M$, the \emph{causal future/past} $J^{+/-}(x)$ of $x$ is the set of all points reached from $x$ by smooth
future-directed causal curves (including $x$ itself); for a subset $S\subset M$, we write $J^\pm(S)=\bigcup_{x\in S}J^\pm(x)$ and also $J(S)=J^+(S)\cup J^-(S)$. 
The \emph{causal hull} of $S\subset M$ is the intersection $J^+(S)\cap J^-(S)$; that is, the set of all points that lie on causal curves with both endpoints in $S$.
A subset is \emph{causally convex} if it is equal to its causal hull, and therefore contains every causal curve that begins and ends in it. One may easily show that the causal hull
of $S$ is the intersection of all causally convex sets containing $S$. 

The spacetime is \emph{globally hyperbolic} if and only if it is devoid of closed causal curves and the causal hull of any compact set is compact~\cite{Bernal:2006xf,Minguzzi:2013}.
A \emph{Cauchy surface} is a set intersected exactly once by every inextendible smooth timelike curve; every Lorentzian spacetime possessing a Cauchy surface is globally hyperbolic, and every globally hyperbolic spacetime may be foliated into Cauchy surfaces that are, additionally, smooth spacelike hypersurfaces. We usually denote a globally hyperbolic spacetime by a single symbol $\Mb$, incorporating the underlying manifold, metric and time orientation. Any open causally convex subset of a globally hyperbolic spacetime is itself globally hyperbolic, when equipped with the induced metric and time-orientation.

The \emph{causal complement} of a set $S$ is defined as $S^\perp=M\setminus J(S)$, 
and sets $S$ and $T$ are \emph{causally disjoint} if $T\subset S^\perp$ or equivalently $S\subset T^\perp$, i.e., if there is no causal curve joining $S$ and $T$. 
In a globally hyperbolic spacetime, the causal future and past of an open set are open, while those of a compact set are closed; accordingly, if $K$ is compact then $K^\perp$ is open and $K^{\perp\perp}$ is closed [though not necessarily compact] and contains $K$. Note that $K^{\perp\perp}$ is not, in general, the causal hull of $K$ although there are situations in which they do coincide.\footnote{In Minkowski space, an example where they coincide is when $K$ is a timelike curve segment; an example where they differ is when $K$ is a subset of a constant time hypersurface.}  If $J^+(S)\cap J^-(T)$ is empty (or, equivalently, if $J^+(S)\cap T$ or $S\cap J^-(T)$ are empty), for subsets $S$ and $T$, then there is a Cauchy surface of $\Mb$ lying to the future of $T$ and the past of $S$, thus establishing a causal ordering in which $S$ is later than $T$. In the case where $S$ and $T$ are causally disjoint, it is possible to order $S$ both later and earlier than $T$. For this reason, 
if one or both of $J^+(S)\cap J^-(T)$ or $J^-(S)\cap J^+(T)$ are empty, we say that $S$ and $T$ are \emph{causally orderable}. Finally, the future/past \emph{Cauchy development} $D^{+/-}(S)$ of a set $S\subset M$ is the set of points $p$ so that every past/future-inextendible piecewise smooth causal curve through $p$ meets $S$, and $D(S)=D^+(S)\cup D^-(S)$.
See e.g.,~\cite[Appx~A]{FewVer:dynloc_theory} for some relevant proofs, references and further discussion.

\paragraph{Background on algebraic QFT}

We summarise some basic ideas of algebraic QFT, which is the framework we adopt for
our discussion. See Haag's classic exposition~\cite{Haag} and the recent book~\cite{AdvAQFT} for details, 
and~\cite{FewsterRejzner_AQFT:2019} for a pedagogical introduction. Our viewpoint is particularly influenced by locally covariant QFT~\cite{BrFrVe03,FewVerch_aqftincst:2015}
but we will avoid having to introduce all the structures of this approach.

Fix a particular QFT, which we will label $\Ac$, that is defined on some collection of globally hyperbolic spacetimes. To each spacetime $\Mb$ in this collection, the theory 
should specify a unital $*$-algebra $\Ac(\Mb)$ and a collection of sub-$*$-algebras $\Ac(\Mb;N)$ labelled by the causally convex open subsets $N$ of $\Mb$, with all these subalgebras containing the unit of $\Ac(\Mb)$. Some remarks on the operational interpretation of these algebras appear below.

We make five assumptions, which we will assume to hold of any AQFT $\Ac$ unless explicitly stated otherwise. The first is called \emph{isotony}: if 
$N_1\subset N_2$ then $\Ac(\Mb;N_1)\subset \Ac(\Mb;N_2)$. The second, \emph{compatibility}, requires that if $N$ is an open causally convex subset of a spacetime $\Mb$ on which $\Ac$ is defined, then $\Ac$ is also defined on $\Nb$ and there is an injective unit-preserving algebraic $*$-homomorphism $\alpha_{\Mb;\Nb}:\Ac(\Nb)\to\Ac(\Mb)$, whose image coincides with the subalgebra $\Ac(\Mb;N)$.\footnote{Where spacetimes with nontrivial topology are concerned, the injectivity assumption may have to be relaxed for some theories. See~e.g.,~\cite{SandDappHack:2012,BeniniDappiaggiSchenkel:2013}.}  Here, $\Nb$ is the globally hyperbolic spacetime comprising $N$ with the metric and time-orientation inherited from $\Mb$.
It is further required that these maps,  which we will refer to as \emph{morphisms} for brevity, obey
\begin{equation}\label{eq:functorial}
\alpha_{\Mb_1;\Mb_2}\circ \alpha_{\Mb_2;\Mb_3} = \alpha_{\Mb_1;\Mb_3}
\end{equation}
if $M_3\subset M_2\subset M_1$.

Third, the \emph{time-slice property} requires that one has $\Ac(\Mb;N)=\Ac(\Mb)$ (equivalently, that $\alpha_{\Mb;\Nb}$ is an isomorphism) whenever $N$ contains a Cauchy surface for $\Mb$. Combining this with compatibility, we see that $\Ac(\Mb;N_1)=\Ac(\Mb;N_2)$ if $N_1\subset N_2$ and $N_1$ contains a Cauchy surface for $\Nb_2$.

Fourth, we assume that \emph{Einstein causality} holds: if regions $N_1$ and $N_2$ within $\Mb$ are causally disjoint then the elements of $\Ac(\Mb;N_1)$ commute with the elements of $\Ac(\Mb;N_2)$.

Finally, we add an assumption that we call the \emph{Haag property}.
Let $K$ be a compact subset of $\Mb$. Suppose that an element $A\in\Ac(\Mb)$ commutes with every element of $\Ac(\Mb;N)$ for every region $N$ contained in the causal complement $K^\perp$ of $K$. Then we assume that $A\in\Ac(\Mb;L)$ whenever $L$ is a \emph{connected}\footnote{Demanding that this holds also when $L$ has multiple connected components may conflict with additivity and so we restrict to the connected case.} open causally convex subset containing $K$. This is a weakened form of Haag duality~\cite{Haag}.\footnote{Exact Haag duality is better phrased in the context of local von Neumann algebras; it also brings complications concerning the regularity of the regions to which it applies. See e.g.,~\cite{Camassa:2007}.}

\paragraph{Observables, states and operations}
The now-standard physical interpretation of AQFT  (though not the original interpretation -- see below) is that the self-adjoint elements $A=A^*$ of $\Ac(\Mb)$ are local observables of the theory, with the self-adjoint
elements of $\Ac(\Mb;N)$ being those observables that can be localised in $N$.
Due to the time-slice property and isotony, a given observable can have many different
localisation regions. One of our purposes in this paper is to provide this
interpretation with a better operational basis.
 
Actually, the viewpoint just sketched is slightly narrow. 
The usage of `observable' to mean a self-adjoint algebra element is parallel
to standard usage in quantum mechanics, where observables are usually identified with self-adjoint operators. In turn, each self-adjoint operator $A$ corresponds uniquely
to a projection valued measure $P_A$ defined on the Borel sets of $\RR$ (and supported on the spectrum of $A$) such that 
\begin{equation}\label{eq:specdecomp}
A = \int_{\RR} \lambda\,dP_A(\lambda)
\end{equation} 
and indeed 
\begin{equation}\label{eq:funcalc}
f(A) = \int_{\RR} f(\lambda)\,dP_A(\lambda) 
\end{equation}
for suitable functions $f$.
Conversely, any projection valued measure $P_A$ defines a self-adjoint operator $A$ by~\eqref{eq:specdecomp}. We recall that, for a projection valued measure $P$ defined on the Borel sets of $\mathbb{R}$, or on measurable subsets $X$ of a more general set $\Omega$, each $P(X)$ is a projection on a (fixed) Hilbert space, such that (i)  $\sum_jP(X_j) = P(X)$ for any disjoint decomposition
of $X$ into countably many $X_j$, (ii) $P(\Omega) = {\bf 1}$ and (iii) $P(X)P(X') = 0$ whenever $X$ and $X'$ have void intersection. A natural generalization
is a positive operator valued measure (or `effect valued measure', as we will later term it),  where $P(X)$ is a positive (more precisely, non-negative) operator for all $X$ satisfying the 
conditions (i) and (ii), but not (iii). While one can still associate a self-adjoint operator with the positive operator valued measure by~\eqref{eq:specdecomp}, there is no longer a functional calculus relation of the form~\eqref{eq:funcalc}. 

One of the lessons of quantum measurement theory is that quantum observables
should be viewed as corresponding to positive operator valued measures (see \cite{Busch_etal:quantum_measurement} for full discussion), with projection valued measures providing the special case of `sharp' measurements, as opposed to the 
general `unsharp' situation.
A conceptually important example of the use of positive operator valued measures is to describe time of arrival in quantum mechanics~\cite{Giannitrapani1997,BrunettiFredenhagen-time:2002,Werner-TOA}.

With these considerations in mind, $\Ac(\Mb;N)$ should be regarded as including all (evaluations of) positive operator valued measures of observables localisable in $N$ (only finite additivity is required in the $*$-algebraic setting). Nonetheless, the term `observable' in AQFT is so strongly associated with self-adjoint elements that it seems wise to adhere to this convention and refer
explicitly to positive operator or effect valued measures where appropriate.

In AQFT, states of the theory on $\Mb$ are linear functionals from $\Ac(\Mb)$ to $\CC$ that are positive, i.e., $\omega(A^*A)\ge 0$ for all $A\in\Ac(\Mb)$, and normalized so that $\omega(\II)=1$. The value $\omega(A)$ is the expected value of measurement outcomes when $A$ is measured in state $\omega$. We mention that, in the $*$-algebra context, an element is described as positive if it is a finite convex combination of elements of the form $A^*A$. Given a state, one may proceed to construct a Hilbert space representation using the GNS construction~\cite{Haag}; however, we will not need to do so at any point in our discussion.

In fact, Haag and Kastler~\cite{HaagKastler1964} were reluctant to interpret elements of the local algebras as observables (which they considered to arise as limits of local algebra elements). Instead, they viewed
the elements in $\Ac(\Mb;N)$ in terms of operations that could be performed within $N$ on the states of $\Ac(\Mb)$; specifically, each $B \in \Ac(\Mb;N)$ 
corresponds to the operation mapping $\omega(\,\cdot\,) \mapsto \omega(B^* \,\cdot\,B)/\omega(B^*B)$. A connection
between operations and positive operator valued measures, leading to the concept of `instruments', has been established in \cite{DaviesLewis:1970}; see also \cite{OkamuraOzawa} for further discussion
in the context of quantum field theory.

We emphasise that, while the interpretations mentioned are certainly consistent with the general conditions laid down on an AQFT, they do not purport to set out how exactly one measures an observable or performs an operation within a region of spacetime. Part of the purpose of this paper is to provide just such an account.

\section{General description of the measurement scheme}\label{sec:genscheme}
 
In a controlled experiment, the experimenter makes a change in the world and compares subsequent observations with what, on the basis of other observations, would have happened otherwise. So as to be able to discuss a variety of experiments, it is convenient to express the discussion in terms of the counterfactual world in which the interaction does not occur, rather than the actual world of the experiment, in which it does. This comparison of different dynamical evolutions lies at the heart of scattering theory and has appeared in locally covariant QFT in the guise of `relative Cauchy evolution'~\cite{BrFrVe03} which has strongly influenced
the general definition of a
coupled system that we describe in Section~\ref{sec:abstract}.
Measurement processes have long been described in terms of scattering~~\cite{HellwigKraus:1969,HellwigKraus:1970}; however, we
believe that our treatment has a stronger operational basis and is fully adapted to curved spacetimes. After that, we describe the measurement scheme --
the way in which probe observables may be considered as measuring local system observables -- 
and in particular discuss the localisation properties of the local observables in Sec.~\ref{sec:scheme}.
A central part of the work concerns the state change consequent upon measurements, set out in Sec.~\ref{sec:instruments},
where localisation is again to the fore, as is the consistency of the framework in relation to composite measurements. 
Finally, Sec.~\ref{sec:symm} contains a brief discussion of the role of internal and geometric symmetries in the measurement chain.

\subsection{Abstract formulation of the coupling between system and probe}\label{sec:abstract}

Consider two algebraic QFTs $\Ac$ and $\Bc$, using $\alpha_{\Mb;\Nb}$ and $\beta_{\Mb;\Nb}$ to denote the inclusion maps arising from spacetime subregions. We will think of $\Ac$ as the system and $\Bc$ as a probe. The combined theory, comprising independent copies of $\Ac$ and $\Bc$ without any cross-interaction, may be denoted $\Uc=\Ac\otimes\Bc$ and assigns to $\Mb$ the algebra $\Uc(\Mb)=\Ac(\Mb)\otimes\Bc(\Mb)$ and local subalgebras $\Uc(\Mb;N)=\Ac(\Mb;N)\otimes\Bc(\Mb;N)$, with inclusion maps $\alpha_{\Mb;\Nb}\otimes\beta_{\Mb;\Nb}$. This is the control situation. 
As we do not want to become too immersed in technical detail, we here assume that the algebras have discrete topology and use the algebraic tensor product. If they were $C^*$-algebras, there would
be a choice of tensor products, among which the minimal tensor product would have certain advantages~\cite{BrFrImRe:2014}.

An experiment may be described by a further theory $\Cc$ in which the system and probe are coupled together. The morphisms for local embeddings will be denoted $\gamma_{\Mb;\Nb}$. For simplicity we will assume that the coupling is operative only within a compact spacetime region $K$, meaning that the theory $\Cc$ should reduce to $\Ac\otimes\Bc$ outside the causal hull $J^+(K)\cap J^-(K)$ of $K$. Precisely, this means that for each open causally convex subset $L$ of $M\setminus(J^+(K)\cap J^-(K))$, there is an isomorphism
\begin{equation}
\chi_\Lb:\Ac(\Lb)\otimes\Bc(\Lb)\to \Cc(\Lb)
\end{equation}
and which is compatible with the locality structures of the two theories as follows:
whenever $L$ and $L'$ are both open causally convex subsets of $M\setminus(J^+(K)\cap J^-(K))$ with $L'\subset L$ then 
\begin{equation}\label{eq:naturality}
\begin{tikzcd}[column sep=large]
\Ac(\Lb')\otimes\Bc(\Lb') \arrow[r, "\alpha_{\Lb;\Lb'}\otimes\beta_{\Lb;\Lb'}"]\arrow[d,"\chi_{\Lb'}"] &
\Ac(\Lb)\otimes\Bc(\Lb) \arrow[d,"\chi_\Lb"]  \\
\Cc(\Lb') \arrow[r, "\gamma_{\Lb;\Lb'}"] & \Cc(\Lb)
\end{tikzcd}
\end{equation}
commutes. This expresses the equivalence of the theories not only at the level of the local algebras themselves, but also in terms of the relations between these algebras; it is closely related to the idea of equivalence between theories explored in local covariant QFT~\cite{BrFrVe03,FewVer:dynloc_theory,FewVerch_aqftincst:2015}. Note that we do not specify what the coupling is; merely that the theory is not assumed to be equivalent to the uncoupled theories in regions overlapping $J^+(K)\cap J^-(K)$. Thus we have a completely general description of the coupling process. Later, we will analyse a specific model with specific couplings, showing explicitly that the morphisms $\chi_\Lb$ exist with the properties above. Demonstrating their existence for more general interactions raises many of the usual problems of constructive quantum field theory, although the restriction of the interaction to a compact region considerably simplifies matters and at least perturbatively (cf.\ the perturbative AQFT programme~\cite{Rejzner_book}) it seems clear that our
description of the coupling is viable and general.

\begin{figure}
\begin{center}
 \includegraphics[scale=0.25]{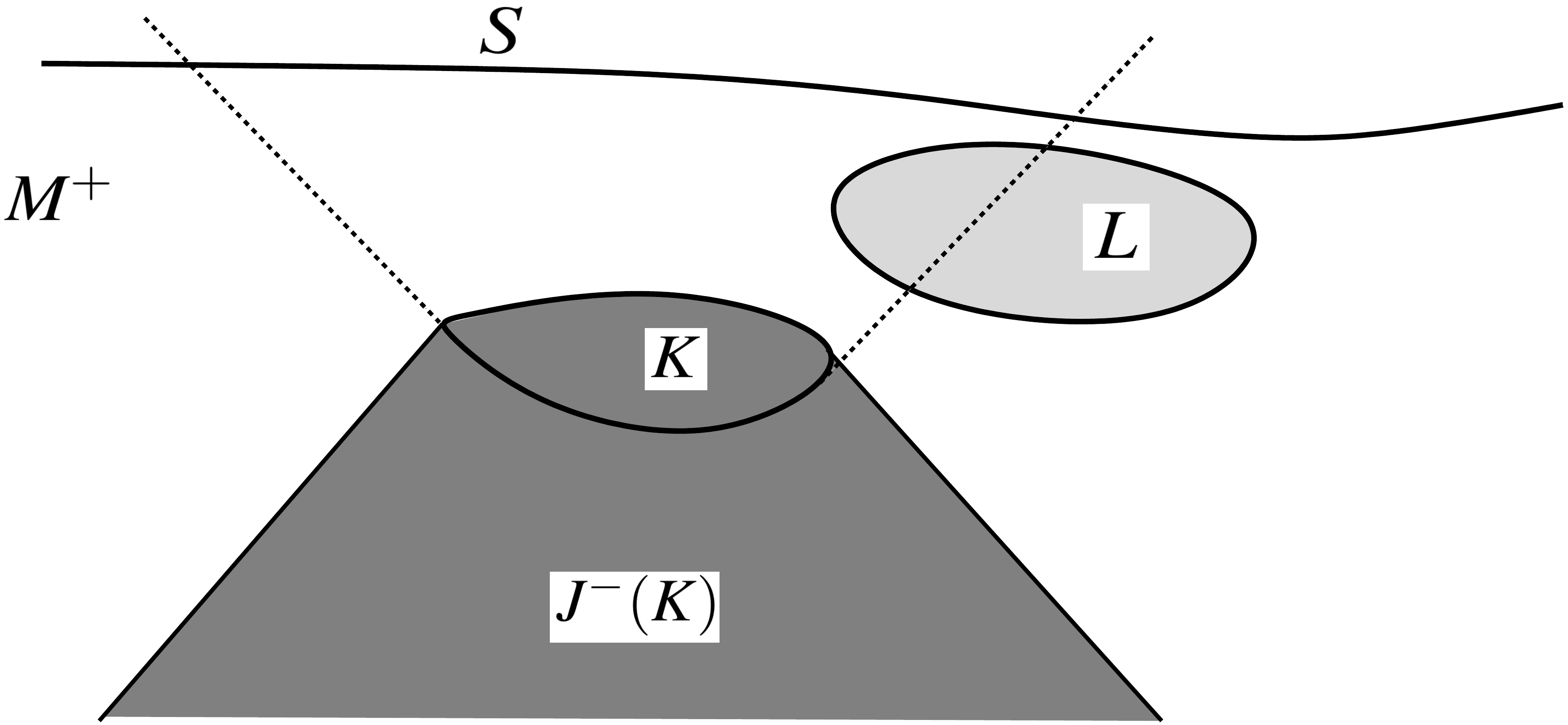} 
\end{center}
\caption{\small In this spacetime diagram, light rays are inclined at $45^\circ$ relative to the vertical axis, with the arrow of time pointing up the page. 
The spacetime region $K$ wherein the coupling takes place, together with its causal past $J^-(K)$ (which includes $K$), are shaded dark. The spacetime region $M^+$ is the complement of $J^-(K)$
and contains Cauchy-surfaces like $S$; $M^-$ is defined analogously. The lightlike boundary of $J^+(K)$ is indicated by dotted lines. The spacetime region $L$ lies outside 
the causal hull $J^+(K) \cap J^-(K)$ of $K$ (even though it intersects $J^+(K)$).
}\label{fig:spacetime}
\end{figure}
The coupling region $K$ determines natural `in' ($-$) and `out' ($+$) regions defined by $M^\pm=M\setminus J^\mp(K)$ which are open causally convex regions that together cover the exterior of $J^+(K)\cap J^-(K)$. 
See Figure~\ref{fig:spacetime} for an illustration. Note that these regions are determined covariantly once the interaction region is specified and without reference to any observer's clock or time coordinate.  We will use the morphisms associated with these regions quite frequently, and therefore abbreviate $\alpha_{\Mb;\Mb^\pm}$ and similar to $\alpha^\pm$, and $\chi_{\Mb^\pm}$ to $\chi^\pm$. As the regions $M^\pm$ contain Cauchy surfaces of $\Mb$ (see e.g.,~\cite[Lem.~A.4]{FewVer:dynloc_theory}), all of the morphisms $\alpha^\pm$, $\beta^\pm$, $\gamma^\pm$, $\chi^\pm$ are isomorphisms, as are the compositions 
\begin{equation}
\kappa^\pm = \gamma^\pm\circ \chi^\pm :\Ac(\Mb^\pm)\otimes\Bc(\Mb^\pm)\to \Cc(\Mb).
\end{equation} 
Using these maps, we define the retarded ($+$) and advanced ($-$) response maps 
\begin{equation}
\tau^\pm = \kappa^\pm\circ(\alpha^\pm\otimes\beta^\pm)^{-1},
\end{equation}
which identify the uncoupled system with the coupled one at early ($-$) or late ($+$) times. Combining these maps, one obtains a \emph{scattering morphism}
\begin{equation}\label{eq:Theta}
\Theta= (\tau^{-}){}^{-1}\circ \tau^+,
\end{equation} 
which is an automorphism of $\Ac(\Mb)\otimes\Bc(\Mb)$ and
would correspond to the adjoint action of the $S$-matrix in standard formulations of scattering theory. Note that $\Theta$ maps algebra elements from late times to early times. The following important properties of the scattering morphism are proved in Appendix~\ref{appx:genscattering}.
\begin{proposition}\label{prop:scattering} 
	(a) If $\hat{K}$ is any compact set containing the coupling region $K$, let $\hat{\Theta}$ be the morphism obtained if one replaces $K$ by $\hat{K}$ in the construction of the scattering morphism of a given coupled theory. Then $\hat{\Theta}=\Theta$.
	
	(b) If $L$ is an open causally convex subset of the causal complement $K^\perp = M^+\cap M^-$ of $K$, then $\Theta$ acts trivially on $\Uc(\Mb;L)=\Ac(\Mb;L)\otimes \Bc(\Mb;L)$.  
	
	(c)	Suppose that $L^+$ (resp. $L^-$) is an open causally convex subset of $M^+$ (resp., $M^-$), and that $L^+\subset D(L^-)$. Then
	$\Theta \Uc(\Mb;L^+) \subset \Uc(\Mb;L^-)$. 
\end{proposition}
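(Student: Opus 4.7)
The plan is to prove all three parts by systematic use of the naturality condition \eqref{eq:naturality} for $\chi$, functoriality of the embedding morphisms $\alpha,\beta,\gamma$, and the time-slice property, applied on carefully chosen subregions. Throughout I will exploit the tensor-product factorisations $\alpha_{\Mb;\Lb}\otimes\beta_{\Mb;\Lb}$ and the observation that every $u\in\Uc(\Mb;L)$ may be written as $u = (\alpha_{\Mb;\Lb}\otimes\beta_{\Mb;\Lb})(u')$ for some $u'\in\Uc(\Lb)$, with linearity handling the general case.

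For part (a), I would first observe that enlarging $K$ to $\hat K$ only shrinks the in/out regions, $\hat M^\pm\subset M^\pm$, while $\hat M^\pm$ still contains a Cauchy surface of $\Mb$ (hence of $\Mb^\pm$). By time-slice, the embeddings $\alpha_{\Mb^\pm;\hat\Mb^\pm}$, $\beta_{\Mb^\pm;\hat\Mb^\pm}$, $\gamma_{\Mb^\pm;\hat\Mb^\pm}$ are thus all isomorphisms. Applying naturality of $\chi$ to the inclusion $\hat\Mb^\pm\subset\Mb^\pm$, combined with functoriality of $\gamma$ and of $\alpha\otimes\beta$, yields $\hat\kappa^\pm = \kappa^\pm\circ(\alpha_{\Mb^\pm;\hat\Mb^\pm}\otimes\beta_{\Mb^\pm;\hat\Mb^\pm})$ and $\hat\alpha^\pm\otimes\hat\beta^\pm = (\alpha^\pm\otimes\beta^\pm)\circ(\alpha_{\Mb^\pm;\hat\Mb^\pm}\otimes\beta_{\Mb^\pm;\hat\Mb^\pm})$. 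These factors cancel in the definition of $\hat\tau^\pm$, producing $\hat\tau^\pm=\tau^\pm$ and so $\hat\Theta=\Theta$.

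For part (b), if $L\subset K^\perp = M^+\cap M^-$ and $u = (\alpha_{\Mb;\Lb}\otimes\beta_{\Mb;\Lb})(u')\in\Uc(\Mb;L)$, I would factorise $\alpha_{\Mb;\Lb} = \alpha^\pm\circ\alpha_{\Mb^\pm;\Lb}$ (and similarly for $\beta$), then apply naturality of $\chi$ to the inclusion $\Lb\subset\Mb^\pm$ to obtain
\[
\tau^\pm(u) \;=\; \kappa^\pm\circ\bigl(\alpha_{\Mb^\pm;\Lb}\otimes\beta_{\Mb^\pm;\Lb}\bigr)(u') \;=\; \gamma_{\Mb;\Lb}\bigl(\chi_\Lb(u')\bigr),
\]
an expression that does not depend on the choice of sign. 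Hence $\tau^+(u) = \tau^-(u)$ and $\Theta u = u$.

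Part (c) is the substantive one. I would introduce the auxiliary region $\Nb$ with $N = \inte D(L^-)$, which is an open causally convex globally hyperbolic subset of $\Mb$ containing $L^-$; the hypothesis $L^+\subset D(L^-)$ together with the openness of $L^+$ then gives $L^+\subset N$, while $L^-$ contains a Cauchy surface of $\Nb$ by construction. For $u = (\alpha_{\Mb;\Lb^+}\otimes\beta_{\Mb;\Lb^+})(u^+)\in\Uc(\Mb;L^+)$ the calculation of (b) produces
\[
\tau^+(u) \;=\; \gamma_{\Mb;\Lb^+}\bigl(\chi_{\Lb^+}(u^+)\bigr) \;=\; \gamma_{\Mb;\Nb}\bigl(\gamma_{\Nb;\Lb^+}(\chi_{\Lb^+}(u^+))\bigr).
\]
Time-slice applied to $\Cc$ on $\Nb$ makes $\gamma_{\Nb;\Lb^-}\colon\Cc(\Lb^-)\to\Cc(\Nb)$ an isomorphism, so $\gamma_{\Nb;\Lb^+}(\chi_{\Lb^+}(u^+)) = \gamma_{\Nb;\Lb^-}(d')$ for a unique $d'\in\Cc(\Lb^-)$. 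Since $\chi_{\Lb^-}$ is an isomorphism, one may write $d' = \chi_{\Lb^-}(e')$ for some $e'\in\Uc(\Lb^-)$ and set $v = (\alpha_{\Mb;\Lb^-}\otimes\beta_{\Mb;\Lb^-})(e')\in\Uc(\Mb;L^-)$. A direct computation of $\tau^-(v)$, again by the method of (b), returns $\gamma_{\Mb;\Lb^-}(d') = \tau^+(u)$, so $\Theta u = v\in\Uc(\Mb;L^-)$, completing the proof.

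The main obstacle is the Lorentzian-geometric step at the start of (c): verifying that $N = \inte D(L^-)$ is open, causally convex, contains $L^+\cup L^-$, and admits $L^-$ as a neighbourhood of a Cauchy surface in $\Nb$. Once this is in hand, the algebra reduces to mechanical applications of \eqref{eq:naturality}, functoriality of the local embeddings, and the time-slice axiom for $\Cc$; the same template handles (a) and (b) with no geometric input beyond the fact that $M^\pm$ (or $\hat M^\pm$) contains a Cauchy surface of $\Mb$.
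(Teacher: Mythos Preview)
Your proposal is correct and follows essentially the same route as the paper's proof in Appendix~\ref{appx:genscattering}: parts (a) and (b) are virtually identical to the paper's arguments, and part (c) proceeds via the same key identity $\tau^\pm\circ(\alpha_{\Mb;\Lb}\otimes\beta_{\Mb;\Lb})=\gamma_{\Mb;\Lb}\circ\chi_\Lb$. The only difference is one of explicitness: where the paper simply asserts that ``$\gamma_{\Mb;\Lb^+}$ factors via $\gamma_{\Mb;\Lb^-}$ due to the assumption $L^+\subset D(L^-)$ and the timeslice property'', you spell out this factorisation by introducing the intermediate region $N=\inte D(L^-)$ and applying time-slice to $\gamma_{\Nb;\Lb^-}$, which is precisely the content behind the paper's one-line claim.
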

Part (a) shows that the scattering morphism is canonically associated with the theories $\Uc$ and $\Cc$ and the identifications between them, while parts (b) and (c) are locality properties. In particular, part~(b) shows that, as one would expect, the coupling has no effect on observables localised in $L\subset K^\perp$. The more refined information provided by part~(c) plays an important role in~\cite{BostelmannFewsterRuep:2020}.
 
\subsection{The measurement scheme} \label{sec:scheme}
 
The isomorphisms just discussed allow us to express certain observables and states of the coupled theory in `uncoupled language'. Thus, a state $\varpi$ of $\Cc(\Mb)$ may be described as uncorrelated at `early times' (i.e., in $\Mb^-$) if $(\kappa^{-})^* \varpi$ is a product state over $\Ac(\Mb^-)\otimes\Bc(\Mb^-)$, or 
equivalently, if $(\tau^-)^*\varpi$ is a product state over $\Ac(\Mb)\otimes\Bc(\Mb)$.\footnote{Here the star denotes the adjoint map: in general $(\zeta^*\varpi)(X) = \varpi(\zeta (X))$.}  Similarly,  
probe observables measured at `late times' (i.e., in $\Mb^+$) are precisely
those of the form $\kappa^+(\II\otimes B)$ for $B\in\Bc(\Mb^+)$ or, equivalently, of the form $\tau^+ (\II\otimes B)$ for $B\in\Bc(\Mb)$.
Of course one can identify states that are uncorrelated at late times and probe observables that are measured at early times in analogous ways.
However, these are of less interest to us because the measurement process requires one to prepare early and measure late, relative to the interaction.

\paragraph{Induced system observables}
Suppose, now, that the probe is prepared in a state $\sigma$ of $\Bc(\Mb)$, while the field system is in state $\omega$ of $\Ac(\Mb)$. This situation corresponds to the combined state 
\begin{equation}
\utilde{\omega}_\sigma =  (\tau^-){}^{-1*}(\omega\otimes\sigma)
\end{equation}
on $\Cc(\Mb)$, which is uncorrelated at early times according to our discussion above. 
Let $B\in\Bc(\Mb)$ be an observable of the probe system, which can be identified at late times with  
the observable 
\begin{equation}\label{eq:Btilde}
\widetilde{B}=  \tau^+(\II\otimes B)
\end{equation}
of $\Cc(\Mb)$.  The expectation of the probe observable, when the system and probe have been prepared as described, may be written
\begin{equation}\label{eq:probe_exp}
\utilde{\omega}_\sigma(\widetilde{B}) = (\omega\otimes\sigma)(\Theta(\II\otimes B)).
\end{equation}  
 
 We now wish to identify a correspondence between probe observables and system observables, so that measurements on the probe may be interpreted as measurements of the system, conditioned by the preparation state $\sigma$. That is, to the probe observable $B\in\Bc(\Mb)$ we wish to identify a system observable $A\in\Ac(\Mb)$, depending on $B$ and $\sigma$, such that
\begin{equation}\label{eq:A}
\utilde{\omega}_\sigma(\widetilde{B}) = \omega(A)
\end{equation}
for all states $\omega$ of $\Ac(\Mb)$. This may be achieved as follows.  

Let $\eta_\sigma:\Ac(\Mb)\otimes\Bc(\Mb)\to \Ac(\Mb)$ be the map
extending $\eta_\sigma(A\otimes B)=\sigma(B)A$ by linearity (and continuity if appropriate),\footnote{The map $C\mapsto \eta_\sigma(C)\otimes \II\in \Ac(\Mb)\otimes\Bc(\Mb)$ is called the conditional expectation of $\Ac(\Mb)\otimes\Bc(\Mb)$ onto $\Ac(\Mb)\otimes\II$.}  
which thus obeys 
\begin{equation} 
A \eta_\sigma(C)=\eta_\sigma((A\otimes \II)C), \qquad\eta_\sigma(C)A=\eta_\sigma(C(A\otimes \II))
\end{equation} 
for $A\in\Ac(\Mb)$, $C\in\Ac(\Mb)\otimes\Bc(\Mb)$. The map $\eta_\sigma$ is completely positive: its tensor products with any finite-dimensional matrix identity map preserve positivity. For completeness, this statement is proved in Appendix~\ref{appx:cpetc}.

Defining $\varepsilon_\sigma:\Bc(\Mb)\to\Ac(\Mb)$ by 
\begin{equation}\label{eq:inducedobs}
\varepsilon_\sigma(B)=(\eta_\sigma\circ\Theta)(\II\otimes B),
\end{equation} 
we have, using~\eqref{eq:probe_exp} and the definitions,
\begin{equation}
\omega(\varepsilon_\sigma(B)) = \omega((\eta_\sigma\circ\Theta)(\II\otimes B))
= (\omega\otimes\sigma)(\Theta(\II\otimes B)) = \utilde{\omega}_\sigma(\widetilde{B}) ,
\end{equation}
which is the required identification between probe and system observables. 
Adapting the terminology of QMT, the probe theory $\Bc$, coupled theory $\Cc$, identification maps $\chi$ and probe preparation state $\sigma$
constitute a \emph{measurement scheme} for the induced system observable $\varepsilon_\sigma(B)$. Of course, nothing here actually requires that $B$ is self-adjoint, and we will
sometimes abuse terminology by referring to $\varepsilon_\sigma(B)$ as the \emph{induced system observable} corresponding to $B$ even when $B$ is not self-adjoint.
\begin{theorem}\label{thm:induced}
	For each probe preparation state $\sigma$, $A=\varepsilon_\sigma(B)$ is the unique solution to \eqref{eq:A} provided that $\Ac(\Mb)$ is separated by its states. In general, the map $\varepsilon_\sigma$ is a completely positive linear map and has the properties
	\begin{equation}\label{eq:cpprops}
	\varepsilon_\sigma(\II)=\II,\qquad \varepsilon_\sigma(B^*)=\varepsilon_\sigma(B)^*, \qquad
	\varepsilon_\sigma(B)^*\varepsilon_\sigma(B)\le \varepsilon_\sigma(B^*B) \,.
	\end{equation} 	
	For fixed $B$, the map $\sigma\mapsto \varepsilon_\sigma(B)$ is weak-$*$ continuous.
\end{theorem}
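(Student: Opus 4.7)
The plan is to deduce every assertion directly from the decomposition $\varepsilon_\sigma = \eta_\sigma \circ \Theta \circ \iota_B$, where $\iota_B(B) = \II\otimes B$, together with the elementary properties of each factor: $\Theta$ is a unital $*$-automorphism of $\Ac(\Mb)\otimes\Bc(\Mb)$, the map $\iota_B$ is a unital $*$-homomorphism $\Bc(\Mb)\to\Ac(\Mb)\otimes\Bc(\Mb)$, and $\eta_\sigma$ is a unital, $*$-preserving, completely positive linear map (the appendix). Uniqueness of the solution of~\eqref{eq:A} will follow at once: any two solutions $A,A'$ satisfy $\omega(A-A')=0$ for all states, so $A=A'$ once $\Ac(\Mb)$ is separated by its states. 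Linearity of $\varepsilon_\sigma$ is automatic from linearity of each constituent map.

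For the identities in~\eqref{eq:cpprops}, I would just compute. Unitality: $\Theta(\II\otimes\II)=\Theta(\II)=\II$ and $\eta_\sigma(\II)=\sigma(\II)\II=\II$, whence $\varepsilon_\sigma(\II)=\II$. The $*$-property follows because $\Theta$ and $\iota_B$ respect adjoints, while $\eta_\sigma$ does so on elementary tensors by $\eta_\sigma((A\otimes B)^*)=\overline{\sigma(B)}A^*=\eta_\sigma(A\otimes B)^*$ and hence everywhere by linearity. For complete positivity of $\varepsilon_\sigma$, note that $\iota_B$ and $\Theta$ are $*$-homomorphisms, which are automatically completely positive, and $\eta_\sigma$ is CP by the appendix; CP is preserved under composition.

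The Kadison--Schwarz inequality $\varepsilon_\sigma(B)^*\varepsilon_\sigma(B)\le\varepsilon_\sigma(B^*B)$ is the standard consequence of $2$-positivity (which is implied by CP). One writes
\begin{equation*}
\begin{pmatrix} \II & B \\ 0 & 0 \end{pmatrix}^{\!*}\!\begin{pmatrix} \II & B \\ 0 & 0 \end{pmatrix}
= \begin{pmatrix} \II & B \\ B^* & B^*B \end{pmatrix} \ge 0
\end{equation*}
in $M_2(\Bc(\Mb))$, applies the $2$-positive map $\id_{M_2}\otimes\varepsilon_\sigma$ to conclude that $\begin{pmatrix} \II & \varepsilon_\sigma(B) \\ \varepsilon_\sigma(B)^* & \varepsilon_\sigma(B^*B) \end{pmatrix}\ge 0$, and then reads off the inequality by sandwiching with the vector $(-\varepsilon_\sigma(B),\II)^{t}$. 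In the purely $*$-algebraic setting, the conclusion $\varepsilon_\sigma(B^*B)-\varepsilon_\sigma(B)^*\varepsilon_\sigma(B)$ is a positive element; this is the main step where I would be a little careful, checking that positivity of a $2\times 2$ block matrix in the algebraic sense really yields this particular inequality, but it is a standard manipulation using $A^*A\ge 0$ for $A=(\varepsilon_\sigma(B),-\II)$.

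Weak-$*$ continuity in $\sigma$ amounts to showing $\omega(\varepsilon_{\sigma_\lambda}(B))\to\omega(\varepsilon_\sigma(B))$ whenever $\sigma_\lambda\to\sigma$ in the weak-$*$ topology, for every state $\omega$ of $\Ac(\Mb)$. Using~\eqref{eq:probe_exp} in the guise $\omega(\varepsilon_\sigma(B))=(\omega\otimes\sigma)(\Theta(\II\otimes B))$, the issue reduces to continuity of $\sigma\mapsto(\omega\otimes\sigma)(C)$ at the fixed element $C=\Theta(\II\otimes B)\in\Ac(\Mb)\otimes\Bc(\Mb)$. Writing $C$ as a finite sum $\sum_i A_i\otimes B_i$ (algebraic tensor product), one has $(\omega\otimes\sigma_\lambda)(C)=\sum_i\omega(A_i)\sigma_\lambda(B_i)\to\sum_i\omega(A_i)\sigma(B_i)$, as required. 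The only subtlety is that algebraicity of the tensor product is essential here; had one used a $C^*$-completion, an additional uniform boundedness argument would be needed.
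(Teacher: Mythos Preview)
Your argument is correct and matches the paper's proof in its overall structure: uniqueness from state-separation, unitality and the $*$-property by direct computation, complete positivity by composition with $\eta_\sigma$, and weak-$*$ continuity from the finite-sum structure of the algebraic tensor product. The one genuine difference is the Kadison--Schwarz inequality.

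You deduce $\varepsilon_\sigma(B)^*\varepsilon_\sigma(B)\le\varepsilon_\sigma(B^*B)$ abstractly from $2$-positivity via the standard $2\times 2$ matrix trick. The paper instead writes $\Theta(\II\otimes B)=\sum_r A_r\otimes B_r$ and computes directly that
\[
\varepsilon_\sigma(B^*B)-\varepsilon_\sigma(B)^*\varepsilon_\sigma(B)=\sum_{r,s}\bigl(\sigma(B_r^*B_s)-\overline{\sigma(B_r)}\sigma(B_s)\bigr)A_r^*A_s,
\]
then observes that the covariance matrix in parentheses is positive semidefinite (Cauchy--Schwarz for $\sigma$) and diagonalises it to exhibit the right-hand side as a sum of squares. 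Your route is slicker and applies to any unital $2$-positive map; the paper's route is more hands-on and makes the positivity manifest in the purely $*$-algebraic sense without having to check that the sandwiching step $v^*Mv\ge 0$ really preserves algebraic positivity (it does, as you note, but the paper avoids the issue entirely). Either approach is fine; your parenthetical remark about ``$A^*A\ge 0$ for $A=(\varepsilon_\sigma(B),-\II)$'' is a little garbled, but the sandwiching argument you actually describe is sound.
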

\begin{proof}
	The first statement summarises the foregoing discussion; we remark only that the vector states in (a common dense domain within) a faithful Hilbert space representation provide a separating set of states. Complete positivity and the properties listed in~\eqref{eq:cpprops} are proved in Appendix~\ref{appx:cpetc}, while weak-$*$ continuity follows from the definition of $\eta_\sigma$. 
\end{proof} 
In particular, self-adjoint elements of $\Bc(\Mb)$ are mapped to self-adjoint elements of $\Ac(\Mb)$.
However, $\varepsilon_\sigma$ is in general neither injective, nor an algebra homomorphism. We mention that all $C^*$-algebras are separated by their states;
indeed this applies whenever $\Ac(\Mb)$ admits a faithful Hilbert space representation, and in QFT it is typical that the \emph{physical} representations admit cyclic and separating vectors (see, e.g.~\cite{FewVerch_aqftincst:2015}). 

Theorem~\ref{thm:induced} has an important consequence. Recall that the experimenter actually measures the observable $\widetilde{B}$ in state $\smash{\utilde{\omega}_\sigma}$ of the combined system, where $B=B^*$ is a probe observable. 
The induced system observable has been selected to satisfy~\eqref{eq:A}; that is, so that the expected outcome of the actual measurement agrees with the expectation value of $A =\varepsilon_\sigma(B)$ in state $\omega$. Owing to~\eqref{eq:cpprops}, the variance of the actual measurement is always at least as great as that of the induced observable:
\begin{align}
\Var (\widetilde{B};\utilde{\omega}_\sigma) &= \utilde{\omega}_\sigma(\widetilde{B}^2)-\utilde{\omega}_\sigma(\widetilde{B})^2 =
\omega(\varepsilon_\sigma(B^2))-\omega(\varepsilon_\sigma(B))^2 \notag\\ 
&\ge \omega(A^2)-\omega(A)^2
=\Var(A;\omega),
\end{align}
using that fact that $\tau^+$ is a homomorphism, so $(\widetilde{B})^2=\widetilde{(B^2)}$ by~\eqref{eq:Btilde}. 
(Here we use the standard formula for the quantum mechanical variance, implicitly assuming that the outcomes are distributed according to a spectral measure for a representation of $\widetilde{B}$.)
Therefore the actual measurement is less sharp than the hypothetical measurement of
the induced observable in the system state. The additional variance derives from 
quantum fluctuations in the probe; we will see later how this can be quantified in 
a particular model. 

Let us note that the measurement scheme is not time-symmetric, because our use of the `in' and `out' regions enforces the idea that preparations are made early while measurements are made late. Again, we emphasise that the distinction between early and late does not refer to an observer's clock, or frame of reference, but just to the covariant delineation of $\Mb^\pm$.  

\paragraph{Effect-valued measures}
The foregoing description of observables can be further resolved, with a view to an underlying probability interpretation. One may consider maps ${\sf E}:\Xc\to \Af$, where $\Xc$ is a $\sigma$-algebra and $\Af$ is a $*$-algebra, so that ${\sf E}$ has the properties of a measure (finitely additive, in the purely $*$-algebraic setting) and takes its values in the effects of $\Af$: that is, the elements $A\in\Af$ such that $A$ and $\II-A$ are both positive. In particular, we demand that ${\sf E}(\Omega_\Xc)=\II$, where $\Omega_\Xc$ is the total space of the $\sigma$-algebra $\Xc$. Then one interprets the elements of $\Xc$ as potential outcomes of the  measurement and for each $X\in\Xc$, $\omega({\sf E}(X))$ is the probability that a value lying in $X$ is measured in state $\omega$. See~\cite[Ch.~9]{Busch_etal:quantum_measurement} for a discussion.\footnote{The utility of this definition depends on the existence of sufficiently many effects, which is not guaranteed in $*$-algebras, although it is in $C^*$-algebras.} Any map ${\sf E}$ of this type will be called an \emph{effect-valued measure} (EVM). Note, that
Ref.~\cite{Busch_etal:quantum_measurement} use the term `observable' for what we call an EVM; 
we have chosen not to do this because the understanding of `observable' as a self-adjoint algebra element is so ingrained in AQFT.

In our present setting any EVM ${\sf E}$ taking values in the effects of the probe induces a corresponding EVM $X\mapsto \varepsilon_\sigma({\sf E}(X))$ taking values in the effects of the system. Here we use the linearity and positivity preserving properties of $\varepsilon_\sigma$. Due to~\eqref{eq:cpprops}, one has
$\varepsilon_\sigma({\sf E}(X))^2 \le \varepsilon_\sigma({\sf E}(X)^2)$
so, even if ${\sf E}$ happens to be sharp, with ${\sf E}(X)^2={\sf E}(X)$, the   induced EVM $\varepsilon_\sigma\circ{\sf E}$ will generally not be sharp:
one knows only that $\varepsilon_\sigma({\sf E}(X))^2 \le \varepsilon_\sigma({\sf E}(X))$. This is another illustration of how probe fluctuations increase variance in measurement outcomes.
We see that the induced system observable is typically less sharp than the probe observable, but (as already noted) sharper than
the actual measurement made on the coupled system.

\paragraph{Localisation properties}  
Now suppose that $L$ is a (possibly disconnected) open causally convex subset of $M$ contained in $K^\perp$, so in particular $L\subset M^+\cap M^-$. 
We have already shown that $\Theta$ acts trivially on $\Ac(\Mb;L)\otimes \Bc(\Mb;L)$, and now use this fact to make two simple but important observations.

First, let $B\in\Bc(\Mb;L)$ be a probe observable localisable in $L$. Then $\Theta$ leaves  $\II\otimes B$ invariant and hence
\begin{equation}
\varepsilon_\sigma(B)=\eta_\sigma(\Theta (\II\otimes B)) = \eta_\sigma(\II\otimes B) = 
\sigma(B)\II.
\end{equation} 
This shows that the system observable induced by a probe observable belonging to the causal complement of the coupling region is a fixed
multiple of the identity (determined by the probe preparation state and $B$) and provides no information about the field. 

Second, suppose that $A\in \Ac(\Mb;L)$ is a system observable localised in $L$ and let $B\in\Bc(\Mb)$ be any probe observable.
Then we may compute 
\begin{align}
[\varepsilon_\sigma(B) ,A] &= [\eta_\sigma(\Theta (\II\otimes B)),A] 
= \eta_\sigma([\Theta (\II\otimes B),A\otimes \II)]) \notag\\ 
&= \eta_\sigma(\Theta [\II\otimes B,A\otimes \II]) 
=0,
\end{align}
as $A\otimes \II$ is invariant under $\Theta$. This shows that the induced
observable $\varepsilon_\sigma(B)$ commutes with all system observables localised in the 
causal complement of $K$. Therefore, all field observables induced by probe observables are localisable in any connected open causally convex set containing the coupling region $K$.
Whether or not it is possible to provide tighter localisation information will be discussed later in the context of a specific model.
Some general considerations of localization concepts for observables in quantum field theory on Minkowski spacetime appear in \cite{Kuckert:2000}, in a model-independent context.

The results of this discussion may be summarised as follows.
\begin{theorem}\label{thm:localisation}
	For each probe observable $B\in\Bc(\Mb)$, the induced system observable $\varepsilon_\sigma(B)$ may be localised in any connected open causally convex set containing $K$. If $B$ may be localised in $K^\perp$ then $\varepsilon_\sigma(B)=\sigma(B)\II$.
\end{theorem}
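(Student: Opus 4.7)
My approach is to derive both claims from Proposition~\ref{prop:scattering}(b) -- the triviality of $\Theta$ on $\Uc(\Mb;L)$ for open causally convex $L \subset K^\perp$ -- combined, for the localisation claim, with the Haag property imposed on the system AQFT in the preliminaries. In fact, almost all of the proof is already present in the discussion immediately preceding the theorem statement; the task is to assemble the pieces and invoke the Haag property at the right step.

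The second assertion is essentially immediate. If $B \in \Bc(\Mb;L)$ for some open causally convex $L \subset K^\perp$, then $\II\otimes B \in \Uc(\Mb;L)$, so Proposition~\ref{prop:scattering}(b) gives $\Theta(\II\otimes B) = \II\otimes B$. Applying $\eta_\sigma$ and using $\eta_\sigma(\II\otimes B) = \sigma(B)\II$ then yields $\varepsilon_\sigma(B) = \sigma(B)\II$.

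For the first assertion, the plan is to show that $\varepsilon_\sigma(B)$ commutes with every $A \in \Ac(\Mb;N)$ for every open causally convex $N \subset K^\perp$, and then to invoke the Haag property to conclude that $\varepsilon_\sigma(B) \in \Ac(\Mb;L)$ for every connected open causally convex $L$ containing $K$. The commutation is obtained as already sketched in the text: for such $A$, Proposition~\ref{prop:scattering}(b) tells us that $A\otimes\II$ is $\Theta$-invariant, so
\begin{equation*}
[\Theta(\II\otimes B), A\otimes\II] = [\Theta(\II\otimes B), \Theta(A\otimes\II)] = \Theta\bigl([\II\otimes B, A\otimes\II]\bigr) = 0,
\end{equation*}
because the two tensor factors commute within $\Ac(\Mb)\otimes\Bc(\Mb)$. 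The bimodule property of $\eta_\sigma$ -- that is, $A\,\eta_\sigma(C) = \eta_\sigma((A\otimes\II)C)$ and its right-handed analogue -- then converts this into $[\varepsilon_\sigma(B),A] = \eta_\sigma([\Theta(\II\otimes B), A\otimes\II]) = 0$, as required.

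I expect no substantial obstacle. The only point that merits care is ensuring that the hypothesis of the Haag property is satisfied verbatim: the commutation must hold for every open causally convex $N$ contained in $K^\perp$, not merely for some particular such $N$, and the argument above is uniform in $N$, so this is automatic. Likewise, the verification of the bimodule property used above is a purely algebraic consequence of the definition of $\eta_\sigma$ on simple tensors, extended by linearity.
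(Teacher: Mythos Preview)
Your proposal is correct and follows essentially the same approach as the paper: both claims are derived from the $\Theta$-invariance of $\Uc(\Mb;L)$ for $L\subset K^\perp$ (Proposition~\ref{prop:scattering}(b)), together with the bimodule property of $\eta_\sigma$ and, for the localisation assertion, the Haag property. The paper's argument is exactly the one you outline, and as you note, it is essentially already given in the paragraph preceding the theorem statement.
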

Note that any causally convex set containing $K$ also contains the causal hull of $K$.

\paragraph{Joint EVMs} We have seen that the induced observables may be localised  
in any suitable (i.e., open, connected and causally convex) neighbourhood of the causal hull $J^+(K)\cap J^-(K)$ of the coupling region $K$. This raises the following question. Consider two
experimenters in causally disjoint spacetime regions. Each can measure a local observable of the probe and Einstein causality entails that these observables commute and are therefore compatible. However, the corresponding induced system observables may both be localised in some suitable neighbourhood of the causal hull of $K$ and (as there is no reason to suppose they have causally disjoint localisation) may be incompatible as system observables. How is this to be reconciled with the compatibility of the probe observables?

The answer may be given using the notion of a \emph{joint EVM}. Here,
two EVMs ${\sf E}_i:\Xc_i\to \Af$ are said to have a \emph{joint EVM} if 
they are the marginals of an EVM ${\sf E}:\Xc_1\otimes\Xc_2\to \Af$, that is, ${\sf E}_1(X_1)={\sf E}(X_1\times \Omega_{\Xc_2})$, ${\sf E}_2(X_2)={\sf E}(\Omega_{\Xc_1}\times X_2)$ (see~\cite[Ch.~11]{Busch_etal:quantum_measurement}). If ${\sf E}$ is a joint EVM valued in the effects of the probe, then it is obvious that $\varepsilon_\sigma\circ {\sf E}$ is a joint EVM for the system EVMs $\varepsilon_\sigma\circ {\sf E}_i$.
However, even if the ${\sf E}_i$ are commuting EVMs (perhaps because they relate to causally disjoint probe measurements) it is not generally the case that the induced system EVMs will commute. This is simply because $\varepsilon_\sigma$ is not generally a homomorphism. In the same vein, 
even if the EVMs ${\sf E}_i$ are sharp, i.e., projection-valued, the same will not necessarily be true of the $\varepsilon_\sigma\circ {\sf E}_i$. 

The importance of these simple observations is that on the one hand, they protect
the freedom of experimenters in causally disjoint regions to independently measure the probe in any way they wish (in particular, sharply or unsharply), while on the other, they protect the principle that incompatible system observables cannot be measured jointly and sharply. The resolution is that the information the experimenters can obtain concerning incompatible system observables is limited to what can be provided by an unsharp joint EVM.

\subsection{Instruments and successive measurements}\label{sec:instruments}

\paragraph{Instruments}

Suppose a probe-effect $B$ is observed. We would like to obtain a new system state that is conditioned on the observation of this effect, which means that the
new state correctly predicts the conditional probability for the joint observation of $B$ together with any system effect, given that $B$ is observed.

Let $A$ and $B$ be effects of the system and probe, respectively, and consider a joint measurement of $A$ and $B$ at late times. By the same reasoning as used above, the probability of the joint effect being observed is
\begin{equation}
\Prob_\sigma(A\& B;\omega) = \omega(\eta_\sigma\Theta (A\otimes B))
\end{equation}
and the conditional probability that $A$ is observed, given that $B$ is observed, is
\begin{equation}\label{eq:conditionalprob}
\Prob_\sigma(A|B;\omega) = \frac{\Prob_\sigma(A\& B;\omega)}{\Prob_\sigma(B;\omega)} = \frac{(\If_\sigma(B)(\omega))(A)}{(\If_\sigma(B)(\omega))(\II)},
\end{equation}
where we write (for any $A\in\Ac(\Mb)$)
\begin{equation}\label{eq:instrument}
(\If_\sigma(B)(\omega))(A):= (\omega\otimes\sigma)(\Theta(A\otimes B)) = (\Theta^*(\omega\otimes\sigma))(A\otimes B).
\end{equation}
In particular, the normalising factor is $(\If_\sigma(B)(\omega))(\II)=\omega(\varepsilon_\sigma(B))$.
We call the map $\If_\sigma(B):\Ac(\Mb)^*_+\to \Ac(\Mb)^*_+$  the \emph{pre-instrument} corresponding to effect $B$ and probe preparation state $\sigma$. The relation~\eqref{eq:conditionalprob} justifies the interpretation that 
$\If_\sigma(B)(\omega)$ is the unnormalized updated system state conditioned on the probe effect $B$ being observed. Our argument here has followed that of~\cite[\S 10.2]{Busch_etal:quantum_measurement} while also adapting it to our present context.
The normalized state  
\begin{equation}
\omega':=\frac{\If_\sigma(B)(\omega)}{\If_\sigma(B)(\omega)(\II)} 
\end{equation} 
is the \emph{post-selected system state} after selective measurement of the probe. It is obvious that $\omega'$ is normalized; to check that $\omega'$ is indeed positive,
recall that the effect $B$ is positive and therefore takes the form $B=\sum_i C_i^*C_i$ for some finite set of elements $C_i\in\Bc(\Mb)$ (in a $C^*$-algebraic setting one can just write $B=C^*C$ for $C=B^{1/2}$). Then
\begin{equation}
\If_\sigma(B)(\omega)(A^*A) = \sum_i (\Theta^*(\omega\otimes\sigma))((A\otimes C_i)^*(A\otimes C_i))\ge 0
\end{equation} 
and it follows that $\omega'(A^*A)\ge 0$ for all $A\in\Ac(\Mb)$.

If one is given an EVM ${\sf E}:\Xc\to \Bc(\Mb)$ then the composition of the pre-instrument with ${\sf E}$ gives a \emph{instrument} in the sense originally introduced by Davies and Lewis~\cite{DaviesLewis:1970}, i.e., a measure $X\mapsto \If_\sigma({\sf E}(X))$ on the $\sigma$-algebra of measurement outcomes valued in positive maps on the state space. In fact this would even be a CP-instrument but we will usually drop the prefix.

\paragraph{Non-selective measurement} In a \emph{non-selective probe measurement}, there is no filtering conditional on the measurement outcome. Using the preceding definitions, a non-selective probe measurement of an EVM ${\sf E}:\Xc\to \Bc(\Mb)$ corresponds to the pre-instrument $\If_\sigma({\sf E}(\Omega_\Xc))=\If_\sigma(\II)$.
A justification for this definition is easily given in the case that $\Omega_\Xc$ is a finite set, for then the additivity properties of the instrument require that
\begin{equation}
\sum_{a\in \Omega_\Xc} \If_\sigma({\sf E}(\{a\})) = \If_\sigma({\sf E}(\Omega_\Xc))
\end{equation}
while the left-hand side, evaluated on $\omega$, is clearly the sum of all the updated states for each possible outcome $a\in\Omega_\Xc$, weighted by their respective probabilities. The same result may be obtained using any other partition of $\Omega_\Xc$.

Explicitly, the updated state resulting from the non-selective measurement is
\begin{equation}
\omega'_{\text{ns}}(A) =  \If_\sigma(\II)(\omega)(A) = (\Theta^*(\omega\otimes\sigma))(A\otimes \II).
\end{equation}
In other words, $\omega'_{\text{ns}}$ is the partial trace of the state $
\Theta^*(\omega\otimes\sigma)$ over the probe. It depends only on the dynamics of the coupling and not on the EVM ${\sf E}$ that was being non-selectively measured. 
In particular, if $A\in\Ac(\Mb)$ may be localised in the causal complement of the coupling region then $\Theta (A\otimes \II)=A\otimes \II$, so $\omega'_{\text{ns}}(A) = \omega(A)$. Just as it should be, 
a non-selective measurement cannot influence the results of other experiments in causally disjoint regions. This is not the case in selective measurement, as we now show.

\paragraph{Locality and post-selection}  
Now let $A$ be a system observable localisable in the causal complement of $K$ and let $B$ be a probe effect, without assumptions on its localisation. 
Using again the fact that $\Theta (A\otimes \II)=A\otimes \II$, and
noting that
\begin{equation}
A\varepsilon_\sigma(B)
= A \eta_\sigma \Theta(\II\otimes B)  = \eta_\sigma\left( (A\otimes \II) \Theta (\II\otimes B)\right)=
\eta_\sigma(\Theta (A\otimes B)),
\end{equation}
the definition of the pre-instrument in~\eqref{eq:instrument} is
\begin{equation}
\If_\sigma(B)(\omega)(A) = \omega(\eta_\sigma(\Theta (A\otimes B)))=\omega(A\varepsilon_\sigma(B)).
\end{equation}
Accordingly, the normalized post-selected state, conditioned on the effect being observed, is
\begin{equation}\label{eq:postselected}
\omega'(A) = \frac{\omega(A\varepsilon_\sigma(B))}{\omega(\varepsilon_\sigma(B))}
\end{equation}
for system observables $A$ localisable in $K^\perp$. 

The following result now follows easily: 
\begin{theorem}
	Consider a measurement of a probe effect $B$ in which the effect is observed. For each $A\in\Ac(\Mb;K^\perp)$, the expectation value of $A$ is unchanged in the post-selected state $\omega'$ if and only if $A$ is uncorrelated with $\varepsilon_\sigma(B)$ in the original system state $\omega$.
\end{theorem}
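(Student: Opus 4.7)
The plan is to exploit the explicit formula \eqref{eq:postselected} for the post-selected state restricted to observables localisable in $K^\perp$, which was derived just above the theorem statement, and to unwind the definition of ``uncorrelated''. First I would fix the convention that two elements $A,C\in\Ac(\Mb)$ are \emph{uncorrelated} in the state $\omega$ iff $\omega(AC)=\omega(A)\omega(C)$; this is the standard definition and fits the fact that the covariance $\omega(AC)-\omega(A)\omega(C)$ vanishes precisely in that case. Throughout I will assume $\omega(\varepsilon_\sigma(B))\neq 0$, which is the probability of observing the effect $B$ and is necessarily nonzero once we condition on its observation.

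Given $A\in\Ac(\Mb;K^\perp)$, formula \eqref{eq:postselected} gives
\begin{equation*}
\omega'(A) \;=\; \frac{\omega(A\,\varepsilon_\sigma(B))}{\omega(\varepsilon_\sigma(B))}.
\end{equation*}
Setting $\omega'(A)=\omega(A)$ and clearing the (nonzero) denominator therefore yields the equivalence
\begin{equation*}
\omega'(A)=\omega(A) \quad\Longleftrightarrow\quad \omega(A\,\varepsilon_\sigma(B)) \;=\; \omega(A)\,\omega(\varepsilon_\sigma(B)),
\end{equation*}
and the right-hand condition is exactly the statement that $A$ and $\varepsilon_\sigma(B)$ are uncorrelated in $\omega$. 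This is the full content of the theorem.

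The only thing to justify is the use of formula \eqref{eq:postselected}, which relies on $\Theta(A\otimes\II)=A\otimes\II$ for $A\in\Ac(\Mb;K^\perp)$; this holds by part~(b) of Proposition~\ref{prop:scattering}, since $K^\perp$ is an open causally convex subset of the causal complement of $K$. There is no substantive obstacle here: the result is essentially a one-line unwinding of \eqref{eq:postselected} once ``uncorrelated'' is fixed. The mildly delicate point is the degenerate case $\omega(\varepsilon_\sigma(B))=0$, but this is excluded by the hypothesis that the effect is observed, since zero probability of observation makes the very notion of a post-selected state ill-defined.
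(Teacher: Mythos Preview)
Your proposal is correct and follows exactly the paper's approach: both arguments reduce to applying formula~\eqref{eq:postselected} and observing that $\omega'(A)=\omega(A)$ is equivalent to $\omega(A\varepsilon_\sigma(B))=\omega(A)\omega(\varepsilon_\sigma(B))$. The paper's proof is a single sentence to this effect; your additional remarks on the justification of~\eqref{eq:postselected} and the degenerate case are accurate but not needed beyond what the paper already established in the preceding discussion.
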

\begin{proof}
	Evidently $\omega'(A)=\omega(A)$ holds if and only if $\omega(A\varepsilon_\sigma(B)) = \omega(A)\omega(\varepsilon_\sigma(B))$. 
\end{proof}

The interpretation of $\omega'$ requires some care. By construction, it is the result of applying post-selection to the state $\omega$, conditioned on the results of the measurement of probe observable $B$. In general, the post-selected state assigns different expectation values to $\omega$ to observables localised in any region of spacetime: even those in the causal past or causal complement of the interaction region. Of course, this does not change events that have happened; the point is simply that the probabilities for those events are different in the post-selected state conditioned on a particular measurement outcome. The reason that probabilities can change for observables localised in the causal complement is simply one of correlation. One might think of a spin measurement made on one member of an entangled pair of spins in a singlet state. Conditioned on the 
result of that measurement, the result of a measurement on the remaining spin may be predicted with certainty, even if the two measurements are causally disjoint. 

It is also instructive to consider a situation of high correlation. Suppose that $\omega$ has the Reeh--Schlieder property (see e.g.,~\cite[\S 4.5.4]{FewVerch_aqftincst:2015}). Then the assumption that $\omega'$ agrees with $\omega$ on observables localised in some $O$ within the causal complement of $K$ entails that
\begin{equation}
\omega\left(A_1^*\left(\frac{\varepsilon_\sigma(B)}{\omega(\varepsilon_\sigma(B))}-\II\right) A_2\right)= \omega'(A_1^*A_2) - \omega(A_1^*A_2)
=0
\end{equation}
for all $A_i\in\Ac(\Mb;O)$, using the commutativity of $A_2$ and $\varepsilon_\sigma(B)$. By the Reeh--Schlieder property of $\omega$ this then implies that the induced system observable $\varepsilon_\sigma(B)$ is a multiple of the unit. Therefore, any nontrivial probe measurement necessarily alters the state in the causal complement of the coupling region in this highly correlated case. As is well known, the Reeh--Schlieder property encodes a strong form of quantum entanglement between causally disjoint regions and our argument shows that causally disjoint probe measurements can in principle be used to detect EPR-like correlations~\cite{SummersWerner:1987,SummersWerner:1995,CliftonHalvorson,VerchWerner:2005,HollandsSanders:2017}. 

One might wonder whether $\omega'$ might agree with $\omega$ in portions of the region
	$J^-(K)\setminus K$ to the past of the coupling region. To see that this is unlikely, suppose that the system obeys Huygens' principle and suppose that $O\subset\Mb$ has no null geodesics connecting it to $K$. In that case $\Theta$ will act trivially on $A\otimes\II$ for all $A\in\Ac(\Mb;O)$ 
	and formula~\eqref{eq:postselected} will hold, as will the Reeh--Schlieder argument just made. Thus, in general, there seems no reason to assume that $\omega'$ agrees with $\omega$ in the backward causal cone of $K$.

Given that $\omega'$ does not agree with $\omega$ in any geometrically determined region of spacetime, there seems to be no purpose in envisaging a transition from $\omega$ to $\omega'$ occurring along or near some
surface in spacetime (whether a constant time surface as in non-relativistically inspired accounts of measurement, or e.g., along the backward light cone of the interaction region as in the proposal of Hellwig and Kraus~\cite{HellwigKraus_prd:1970}, or an earlier proposal of Schlieder~\cite{Schlieder:1968}). In fact, as Hellwig and Kraus recognised, whether the state actually remains unchanged or not in the past of the coupling region is a `pure convention' with no operational significance as the region is no longer accessible to further experiment. It is also of no consequence at what stage the probe itself is measured; this may take place far from the coupling region both in space and time. 

In view of these considerations, there seems no reason to invoke a physical process of  state-reduction occurring at points or surfaces in spacetime, rather, the updated state reflects the observer's filtering of the system by conditioning on measurement outcomes. Having said that, we regard the important message of the paper of Hellwig and Kraus to be that multiple measurements can be made in a consistent way within QFT in Minkowski spacetime. We will now proceed to show that this is true also in our framework, under specified assumptions, even in curved spacetimes.

\paragraph{Successive measurements}
Now suppose that two measurements of the field system are made in compact interaction regions $K_1$ and $K_2$. We suppose that $K_2\cap J^-(K_1)$ is empty, so that $K_2$ may be regarded as later than $K_1$ at least by some observers. Later, we will consider the situation in which they are causally disjoint and can be ordered in either way.

We consider two probe systems $\Bc_i(\Mb)$ and coupled systems $\Cc_i(\Mb)$,
with coupling regions $K_i$, each corresponding to its own scattering morphism $\Theta_i$ on 
$\Ac(\Mb)\otimes\Bc_i(\Mb)$. On the three-fold tensor product $\Ac(\Mb)\otimes\Bc_1(\Mb)\otimes\Bc_2(\Mb)$, we define $\hat{\Theta}_1=\Theta_1\otimes_3 \id_{\Bc_2(\Mb)}$, and  $\hat{\Theta}_2=\Theta_2\otimes_2 \id_{\Bc_1(\Mb)}$, where the subscript on the tensor product indicates the slot into which the second factor is inserted. Taken together, the two probes may be considered as a single probe with algebra $\Bc_1(\Mb)\otimes\Bc_2(\Mb)$ and coupling region $K_1\cup K_2$ and a combined
scattering morphism $\hat{\Theta}$ on $\Ac(\Mb)\otimes\Bc_1(\Mb)\otimes\Bc_2(\Mb)$.
We assume that the scattering morphism obeys a natural \emph{causal factorisation formula}
\begin{equation}\label{eq:Bogoliubov}
\hat{\Theta} = \hat{\Theta}_1\circ\hat{\Theta}_2,
\end{equation}
(related to a special case of Bogoliubov's factorisation formula in perturbative AQFT~\cite{Rejzner_book,DuetschFredenhagen:2000,BogoliubovShirkov})
recalling that our scattering morphism maps observables from the future to the past.
The main result of this section is that the instruments corresponding to the individual and combined measurements act in a coherent fashion. 

\begin{theorem}\label{thm:causalcomp}
Consider two probes as described above, with $K_2\cap J^-(K_1)=\emptyset$. For all probe preparations $\sigma_i$ of $\Bc_i(\Mb)$ and all probe observables $B_i\in\Bc_i(\Mb)$, the following identity for the pre-instruments holds:
\begin{equation}\label{eq:composite_instrument1}
	\If_{\sigma_2}(B_2)\circ \If_{\sigma_1}(B_1) = \If_{\sigma_1\otimes \sigma_2}(B_1\otimes B_2).
\end{equation}
If, in fact, $K_1$ and $K_2$ are causally disjoint, we have
\begin{equation}\label{eq:composite_instrument2}
\If_{\sigma_2}(B_2)\circ \If_{\sigma_1}(B_1) = \If_{\sigma_1\otimes \sigma_2}(B_1\otimes B_2)  = \If_{\sigma_1}(B_1)\circ \If_{\sigma_2}(B_2).
\end{equation}
\end{theorem}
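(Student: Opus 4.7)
The strategy is to unfold both sides of equation~\eqref{eq:composite_instrument1} against an arbitrary test observable $A\in\Ac(\Mb)$ using the definition~\eqref{eq:instrument} of the pre-instrument, and then relate them through the causal factorisation~\eqref{eq:Bogoliubov}. The key computational step is tracking how the trivial slots of $\hat\Theta_1$ and $\hat\Theta_2$ interact when the combined scattering morphism $\hat\Theta$ is decomposed.

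First I would expand the right-hand side of~\eqref{eq:composite_instrument1} directly: by definition,
\begin{equation}
\If_{\sigma_1\otimes\sigma_2}(B_1\otimes B_2)(\omega)(A) = (\omega\otimes\sigma_1\otimes\sigma_2)\bigl(\hat\Theta(A\otimes B_1\otimes B_2)\bigr),
\end{equation}
using $\sigma_1\otimes\sigma_2$ as a state on $\Bc_1(\Mb)\otimes\Bc_2(\Mb)$. Applying the hypothesis $\hat\Theta=\hat\Theta_1\circ\hat\Theta_2$ and writing (in formal tensor-sum notation) $\Theta_2(A\otimes B_2)=\sum_k X_k\otimes Y_k$ with $X_k\in\Ac(\Mb)$, $Y_k\in\Bc_2(\Mb)$, the fact that $\hat\Theta_2$ acts trivially on the $\Bc_1$-slot gives
\begin{equation}
\hat\Theta_2(A\otimes B_1\otimes B_2) = \sum_k X_k\otimes B_1\otimes Y_k,
\end{equation}
and then the fact that $\hat\Theta_1$ acts trivially on the $\Bc_2$-slot gives
\begin{equation}
\hat\Theta\bigl(A\otimes B_1\otimes B_2\bigr) = \sum_k \Theta_1(X_k\otimes B_1)\otimes Y_k.
\end{equation}
Evaluating against $\omega\otimes\sigma_1\otimes\sigma_2$ and collecting the $\sigma_1$-pairing with $\Theta_1$ reproduces the definition of $\If_{\sigma_1}(B_1)(\omega)$ applied to $X_k$, yielding
\begin{equation}
\sum_k \If_{\sigma_1}(B_1)(\omega)(X_k)\,\sigma_2(Y_k)
= \bigl(\If_{\sigma_1}(B_1)(\omega)\otimes\sigma_2\bigr)(\Theta_2(A\otimes B_2)),
\end{equation}
which is precisely $\bigl(\If_{\sigma_2}(B_2)\circ\If_{\sigma_1}(B_1)\bigr)(\omega)(A)$. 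Since $A$ was arbitrary, this establishes~\eqref{eq:composite_instrument1}.

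For~\eqref{eq:composite_instrument2}, the extra observation is that when $K_1$ and $K_2$ are causally disjoint, the condition $K_2\cap J^-(K_1)=\emptyset$ holds together with its counterpart $K_1\cap J^-(K_2)=\emptyset$. Thus the hypothesis of causal factorisation applies in both orderings, yielding both $\hat\Theta=\hat\Theta_1\circ\hat\Theta_2$ and $\hat\Theta=\hat\Theta_2\circ\hat\Theta_1$. Running the same argument with the roles of the two probes swapped gives $\If_{\sigma_1}(B_1)\circ\If_{\sigma_2}(B_2)=\If_{\sigma_1\otimes\sigma_2}(B_1\otimes B_2)$, and the two pre-instrument compositions coincide.

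I expect the only real obstacle to be notational rather than conceptual: the three-slot bookkeeping of $\hat\Theta_1$ and $\hat\Theta_2$ (with differing trivial factors) can obscure what is essentially a clean substitution, and one must be careful that the identification of $\If_{\sigma_1\otimes\sigma_2}(B_1\otimes B_2)$ uses $\sigma_1\otimes\sigma_2$ as a bona fide state on the combined probe algebra and $B_1\otimes B_2$ as a bona fide element thereof. Everything else is a direct consequence of the linearity and tensorial structure already recorded in~\eqref{eq:instrument} together with the hypothesis~\eqref{eq:Bogoliubov}.
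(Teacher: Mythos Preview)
Your proof is correct and follows essentially the same approach as the paper's: both unfold the definition~\eqref{eq:instrument} and invoke the causal factorisation~\eqref{eq:Bogoliubov}, with the three-slot bookkeeping of $\hat\Theta_1$ and $\hat\Theta_2$ as the only real work. The paper computes the composition $\If_{\sigma_2}(B_2)\circ\If_{\sigma_1}(B_1)$ and reduces it to the combined instrument via $\eta_{\sigma_2}$ and an explicit permutation of tensor slots, whereas you start from the combined instrument and expand via an explicit decomposition $\Theta_2(A\otimes B_2)=\sum_k X_k\otimes Y_k$; these are notational variants of the same calculation, and your version is arguably a little more transparent.
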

This is a key result that permits experiments to be analysed into their causally constituent parts;
as, for example, in the discussion of impossible measurements~\cite{BostelmannFewsterRuep:2020}. From a mathematical perspective it shows that there is a monoidal structure on pre-instruments which is even symmetric for causally disjoint coupling regions. In general, however, if the couplings are strictly causally ordered, the 
symmetry is broken. This would not be seen in a Euclidean QFT framework.
\begin{proof}
	The composition of the pre-instruments is computed as follows. For any system state $\omega$ and system observable $A$, we have 
	\begin{align}
	\If_{\sigma_2}(B_2)(\If_{\sigma_1}(B_1)(\omega))(A) &= (\If_{\sigma_1}(B_1)(\omega)\otimes\sigma_2)(\Theta_2 (A\otimes B_2))\notag \\
	&= \If_{\sigma_1}(B_1)(\omega)(\eta_{\sigma_2}(\Theta_2(A\otimes B_2))) \notag\\
	&= (\Theta_1^*(\omega\otimes\sigma_1))(\eta_{\sigma_2}(\Theta_2(A\otimes B_2))\otimes B_1) \notag\\
	&=(\Theta_1^*(\omega\otimes\sigma_1)\otimes_2\sigma_2)(\Theta_2(A\otimes B_2))\otimes_3 B_1)
	\end{align}
	where we have decorated the tensor products with subscripts where necessary to indicate the slot into which the second factor is inserted. Now we are free to permute the second and third tensor factors if we do so in both the algebras and the functionals acting thereon. Therefore
	\begin{align}
	\If_{\sigma_2}(B_2)(\If_{\sigma_1}(B_1)(\omega))(A)
	 &=(\Theta_1^*(\omega\otimes\sigma_1)\otimes_3\sigma_2)(\Theta_2(A\otimes B_2))\otimes_2 B_1) \notag\\
	&= (\omega\otimes\sigma_1\otimes\sigma_2)((\Theta_1\otimes_3\id)(\Theta_2(A\otimes B_2)\otimes_2 B_1))\notag\\
	&= (\omega\otimes\sigma_1\otimes\sigma_2)((\hat{\Theta}_1\circ\hat{\Theta}_2)(A\otimes B_1\otimes B_2))\notag\\	
	&= (\omega\otimes\sigma_1\otimes\sigma_2)(\hat{\Theta} (A\otimes B_1\otimes B_2)) \notag\\
	&= \If_{\sigma_1\otimes \sigma_2}(B_1\otimes B_2)(\omega)(A)
	\end{align}
	where we have used the causal factorisation formula~\eqref{eq:Bogoliubov}. This proves the first statement; the second is an immediate consequence, as~\eqref{eq:Bogoliubov} now implies
	$\hat{\Theta}_2\circ\hat{\Theta}_1=\hat{\Theta}$.
\end{proof}

\begin{corollary}
	Consider two probes as described above, with $K_2\cap J^-(K_1)=\emptyset$, effects $B_i\in\Bc_i(\Mb)$ and probe preparation states $\sigma_i$ ($i=1,2$). Suppose $B_1$ has nonzero probability of being observed in system state $\omega$, and that $B_2$ has nonzero probability of being observed in system state $\omega'_1$, the post-selected system state conditioned on $B_1$ being observed in state $\omega$. Then the post-selected state $\omega''_{12}$ conditioned on $B_2$ being observed in state $\omega'_1$ coincides with the post-selected state $\omega'_{12}$ conditioned on $B_1\otimes B_2$ being observed in state $\omega$.
\end{corollary}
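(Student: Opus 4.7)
The plan is to deduce the corollary directly from the identity~\eqref{eq:composite_instrument1} in Theorem~\ref{thm:causalcomp}, with the only real work being to track how the normalisation constants cancel when the pre-instruments are converted into post-selected (i.e., normalised) states. Because the pre-instruments $\If_\sigma(B):\Ac(\Mb)^*_+\to\Ac(\Mb)^*_+$ are \emph{linear} maps on the state space, rescaling the input state by a positive scalar rescales the output by the same factor; this is the key structural fact that makes the argument go through.

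First I would write down the three states explicitly. By definition
\begin{equation*}
\omega'_1 = \frac{\If_{\sigma_1}(B_1)(\omega)}{\If_{\sigma_1}(B_1)(\omega)(\II)},\qquad
\omega''_{12} = \frac{\If_{\sigma_2}(B_2)(\omega'_1)}{\If_{\sigma_2}(B_2)(\omega'_1)(\II)},\qquad
\omega'_{12} = \frac{\If_{\sigma_1\otimes\sigma_2}(B_1\otimes B_2)(\omega)}{\If_{\sigma_1\otimes\sigma_2}(B_1\otimes B_2)(\omega)(\II)},
\end{equation*}
where the nonvanishing probability hypotheses guarantee that all three denominators are strictly positive (the denominators in $\omega'_1$ and $\omega'_{12}$ equal $\omega(\varepsilon_{\sigma_1}(B_1))$ and $\omega(\varepsilon_{\sigma_1\otimes\sigma_2}(B_1\otimes B_2))$ respectively, and the denominator in $\omega''_{12}$ equals $\omega'_1(\varepsilon_{\sigma_2}(B_2))$, which is positive by assumption).

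Next, using the linearity of $\If_{\sigma_2}(B_2)$ in its state argument, I pull the scalar $\If_{\sigma_1}(B_1)(\omega)(\II)^{-1}$ out of the numerator of $\omega''_{12}$ and then invoke~\eqref{eq:composite_instrument1} to obtain
\begin{equation*}
\If_{\sigma_2}(B_2)\bigl(\If_{\sigma_1}(B_1)(\omega)\bigr)
= \If_{\sigma_1\otimes\sigma_2}(B_1\otimes B_2)(\omega).
\end{equation*}
Inserting this into both the numerator and denominator of $\omega''_{12}$, the factor $\If_{\sigma_1}(B_1)(\omega)(\II)^{-1}$ cancels, leaving precisely the expression that defines $\omega'_{12}$. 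Hence $\omega''_{12}=\omega'_{12}$ as functionals on $\Ac(\Mb)$.

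There is no real obstacle here; the corollary is a one-line consequence of Theorem~\ref{thm:causalcomp} once one has observed that $\If_\sigma(B)$ is positively homogeneous (in fact linear) in the state. The only point requiring a word of care is to verify nondegeneracy of the normalising denominators, which follows from the positivity hypotheses on the probabilities and the definition of $\varepsilon_\sigma$; this justifies forming the quotients in the first place. The analogous statement in the causally disjoint case, in which the roles of $B_1,\sigma_1$ and $B_2,\sigma_2$ may be interchanged, then follows at once from the symmetric form~\eqref{eq:composite_instrument2}.
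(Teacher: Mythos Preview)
Your proof is correct and follows essentially the same route as the paper's: write out the three post-selected states, use the linearity (positive homogeneity) of the pre-instrument in its state argument to cancel the intermediate normalisation, and then apply~\eqref{eq:composite_instrument1} to identify the resulting quotient with $\omega'_{12}$. If anything, you are slightly more explicit than the paper in justifying why the denominators are nonzero and in naming linearity as the reason the scalar factor cancels.
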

\begin{proof}
	We may compute 
	\begin{equation}
	\omega'_1 = \frac{\If_{\sigma_1}(B_1)(\omega)}{\If_{\sigma_1}(B_1)(\omega)(\II)}
	\end{equation}
	and 
	\begin{equation}
	\omega''_{12} = \frac{\If_{\sigma_2}(B_2)(\omega'_1)}{\If_{\sigma_2}(B_2)(\omega_1')(\II)}
	\end{equation}
	conditioned on both effects being observed, post-selecting on the $B_1$ measurement at the intermediate step. Obviously, the normalisation factors applied to $\omega_1'$ cancel in the formula for $\omega''_{12}$ and so we also have
	\begin{equation}
	\omega''_{12} = \frac{\If_{\sigma_2}(B_2)(\If_{\sigma_1}(B_1)(\omega))}{\If_{\sigma_2}(B_2)(\If_{\sigma_1}(B_1)(\omega))(\II)} = 
	\frac{\If_{\sigma_1\otimes \sigma_2}(B_1\otimes B_2)(\omega)}{\If_{\sigma_1\otimes \sigma_2}(B_1\otimes B_2)(\omega)(\II)} = \omega'_{12}
	\end{equation}
	where the denominators are equal by setting $A=\II$ in \eqref{eq:composite_instrument1}. 
\end{proof}
If the $K_i$ are causally disjoint then the post-selection may be made in either order, with the same result. We emphasise that we have not needed to invoke any reduction of the state across geometric boundaries in spacetime; everything follows from the basic definitions that we have set out, and from the causal factorisation formula. The latter must be verified for concrete models of system-probe interactions. 

Exactly as one would hope, we have shown that experiments conducted in causally disjoint regions [that is, for which the coupling regions are causally disjoint] may be conducted in ignorance of one another, or combined as a single overall experiment by coordinating their results.

\paragraph{Locally performed operations} A post-selected effect measurement partly coincides with the idea of a locally performed operation. 
It is convenient to switch to a $C^*$-algebraic setting for these purposes. Fix a probe preparation state, let $B$ be a probe effect as before with nonzero probability
of being observed in system state $\omega$, and let $\omega'$ be the post-selected state.
 Using~\eqref{eq:postselected} and the positivity-preserving property of $\varepsilon_\sigma$, we see that
\begin{equation}
\omega'(A)= \frac{\omega(A \varepsilon_\sigma(B))}{\omega(\varepsilon_\sigma(B))} = 
\frac{\omega(\varepsilon_\sigma(B)^{1/2}A \varepsilon_\sigma(B)^{1/2})}{\omega(\varepsilon_\sigma(B))}  
\end{equation} 
for any $A\in\Ac(\Mb;N)$ with $N\subset K^\perp$. Here we have used the fact that
square roots can be taken inside the local algebras. For observables localisable in the causal complement of $K$, the state appears to have been produced by an operation $\varepsilon_\sigma(B)^{1/2}$ which is localisable in 
any connected causally convex neighbourhood of the causal hull of $K$.

\subsection{Symmetries}\label{sec:symm}

\paragraph{Gauge invariance} It may be that not all elements of the algebra describing a theory should be regarded as observable. This is the case where a global gauge symmetry exists; only those elements that are gauge-invariant are to be regarded as potentially observable (unless the measurement coupling breaks the symmetry, for instance). This can be easily accommodated within our general scheme as we now show.

Let $G$ be a common group of global gauge transformations for $\Ac$, $\Bc$ and $\Cc$.
That is, to each open causally convex region $L$ of $\Mb$ (including the possibility $L=M$) there is an action $\varphi_\Lb$ of $G$ on $\Ac(\Lb)$ such that
\begin{equation}\label{eq:gauge}
\alpha_{\Lb;\Lb'}\circ \varphi_{\Lb'}(g) = \varphi_\Lb(g)\circ\alpha_{\Lb;\Lb'}, \qquad g\in G
\end{equation}
holds for each pair of such regions with $L'\subset L$. This definition is motivated by the
discussion of global gauge groups in the context of locally covariant QFT, where they arise as functorial automorphism groups~\cite{Fewster:gauge}. Similarly, there should be a comparable actions $\psi$ on $\Bc$ and $\xi$ on $\Cc$. 
We say that the coupling is gauge-invariant if these actions are related by 
\begin{equation}\label{eq:gaugeinv}
\xi_\Lb(g)\circ\chi_\Lb = \chi_\Lb\circ (\varphi_\Lb(g)\otimes\psi_\Lb(g))
\end{equation} 
for any $L$ contained in the causal complement of $K$,
recalling that $\chi_\Lb$ is the isomorphism from $\Ac(\Lb)\otimes\Bc(\Lb)$ to $\Cc(\Lb)$ that exists in this situation. Combining~\eqref{eq:gaugeinv} and~\eqref{eq:naturality}, we deduce
\begin{align}
\xi(g)\circ\kappa^\pm &= \xi(g)\circ\gamma^\pm \circ \chi^\pm = \gamma^\pm\circ\xi^\pm(g)\circ \chi^\pm = \gamma^\pm\circ\chi^\pm \circ (\varphi^\pm(g)\otimes\psi^\pm(g)) \notag \\ 
&= \kappa^\pm\circ (\varphi^\pm(g)\otimes\psi^\pm(g)) 
\end{align}
(as usual we abbreviate $\varphi_\Mb$ and $\varphi_{\Mb^\pm}$ to $\varphi$ and $\varphi^\pm$ etc): and hence 
\begin{equation}
\xi(g)\circ\tau^\pm = \tau^\pm\circ (\varphi(g)\otimes\psi(g)).
\end{equation}
Consequently, the scattering transformation is gauge-invariant,
\begin{equation}\label{eq:theta_inv}
(\varphi_\Mb(g)\otimes\psi_\Mb(g))\circ\Theta = (\tau^-)^{-1}\circ\xi(g)\circ\tau^+= \Theta\circ(\varphi_\Mb(g)\otimes\psi_\Mb(g)).
\end{equation}
Under these circumstances, the induced observables transform in an equivariant fashion.
\begin{theorem}
	The induced observables obey
	\begin{equation}
	\varphi(g)(\varepsilon_\sigma(B)) = \varepsilon_{\psi(g^{-1})^*\sigma}(\psi(g)B).
	\end{equation}
	In particular, if $\sigma$ is gauge-invariant, then gauge-invariant probe observables induce gauge-invariant system observables.
\end{theorem}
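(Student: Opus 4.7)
My plan is to prove the identity by carrying the global gauge transformation $\varphi(g)$ through the definition $\varepsilon_\sigma(B) = \eta_\sigma(\Theta(\II\otimes B))$ one factor at a time, using in turn (i) a change-of-probe-state intertwining relation for $\eta_\sigma$, (ii) the gauge invariance of $\Theta$ from \eqref{eq:theta_inv}, and (iii) the triviality of $\varphi(g)$ on the unit. The corollary about gauge-invariant states and observables then drops out by setting $\psi(g^{-1})^*\sigma=\sigma$ and $\psi(g)B=B$.

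The main technical step is an intertwining identity for the conditional-expectation-type map $\eta_\sigma$. The claim is that for any $g\in G$ and any $C\in\Ac(\Mb)\otimes\Bc(\Mb)$,
\begin{equation}\label{eq:eta-intertwine}
\varphi(g)\circ \eta_\sigma \;=\; \eta_{\psi(g^{-1})^*\sigma}\circ \bigl(\varphi(g)\otimes \psi(g)\bigr).
\end{equation}
By linearity (and continuity, if applicable) it suffices to check this on simple tensors $C=A\otimes B'$, where both sides equal $\sigma(B')\,\varphi(g)(A)$; indeed the right-hand side gives $(\psi(g^{-1})^*\sigma)(\psi(g)B')\,\varphi(g)(A) = \sigma(\psi(g^{-1})\psi(g)B')\,\varphi(g)(A)$, and $\psi(g^{-1})\psi(g)=\id$ since $\psi$ is a group action.

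With \eqref{eq:eta-intertwine} in hand, the computation becomes a short chain:
\begin{align*}
\varphi(g)\bigl(\varepsilon_\sigma(B)\bigr)
&= \varphi(g)\bigl(\eta_\sigma(\Theta(\II\otimes B))\bigr)
 = \eta_{\psi(g^{-1})^*\sigma}\bigl((\varphi(g)\otimes\psi(g))\Theta(\II\otimes B)\bigr)\\
&= \eta_{\psi(g^{-1})^*\sigma}\bigl(\Theta\,(\varphi(g)\otimes\psi(g))(\II\otimes B)\bigr)
 = \eta_{\psi(g^{-1})^*\sigma}\bigl(\Theta(\II\otimes\psi(g)B)\bigr)\\
&= \varepsilon_{\psi(g^{-1})^*\sigma}(\psi(g)B),
\end{align*}
where the second equality uses \eqref{eq:eta-intertwine}, the third uses gauge invariance of $\Theta$ from \eqref{eq:theta_inv}, and the fourth uses $\varphi(g)\II=\II$ (actions are unit-preserving $*$-automorphisms). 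The final statement about gauge-invariant probe states is then immediate: if $\psi(g^{-1})^*\sigma=\sigma$ for all $g$ and $\psi(g)B=B$ for all $g$, then $\varphi(g)\varepsilon_\sigma(B)=\varepsilon_\sigma(B)$ for all $g$, i.e., $\varepsilon_\sigma(B)$ is gauge-invariant.

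I do not anticipate serious obstacles here; the one point that deserves care is the derivation of \eqref{eq:eta-intertwine}, specifically whether one must worry about any topological completion in passing from simple tensors to general elements. In the discrete-topology algebraic tensor product setting adopted in Section~\ref{sec:abstract}, density on simple tensors suffices and no continuity argument is needed, so \eqref{eq:eta-intertwine} is purely algebraic. The gauge-equivariance of $\Theta$ was already established as \eqref{eq:theta_inv}, so the only genuinely new input in the present theorem is the bookkeeping identity \eqref{eq:eta-intertwine} for $\eta_\sigma$.
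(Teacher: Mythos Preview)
Your proof is correct and follows essentially the same approach as the paper: both establish the intertwining identity $\varphi(g)\circ\eta_\sigma = \eta_{\psi(g^{-1})^*\sigma}\circ(\varphi(g)\otimes\psi(g))$ on simple tensors and then combine it with the gauge invariance~\eqref{eq:theta_inv} of $\Theta$ and the definition~\eqref{eq:inducedobs}. Your version is slightly more explicit in writing out the intermediate steps and in noting that no topological completion issues arise in the algebraic tensor product setting.
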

\begin{proof}
	First, note that 
	\begin{align}
	\varphi(g)(\eta_\sigma(A\otimes B)) &= \sigma(B)\varphi(g)A = 
	(\psi(g^{-1})^*\sigma)(\psi(g)(B))\varphi(g)A \notag\\
	&=
	\eta_{\psi(g^{-1})^*\sigma} ((\varphi(g)\otimes\psi(g))(A\otimes B)).
	\end{align}
	so $\varphi(g)\circ\eta_\sigma = \eta_{\psi(g^{-1})^*\sigma}\circ (\varphi(g)\otimes\psi(g))$. Using the definition~\eqref{eq:inducedobs} and the gauge-invariance~\eqref{eq:theta_inv} of the scattering morphism, the first statement is proved, and the second is immediate.
\end{proof}
This result shows how the 
(un)observability of gauge (non)invariant quantities is passed from the probe to the system. Of course this connection is removed if the coupling is not gauge invariant in the above sense, or if the probe preparation state is not gauge-invariant. For example, the average magnetisation of a spin chain is not gauge-invariant under simultaneous rotation of all spins, but can be measured if one couples to a fixed external field that breaks this invariance.\footnote{We thank J\"{u}rg Fr\"{o}hlich for stimulating this remark.}

\paragraph{Geometrical symmetries} 
Suppose that $\Mb$ admits a time-translation symmetry that is represented in the system theory by a $1$-parameter group of automorphisms $\nu_t$ of $\Ac(\Mb)$,
so that $\nu_t\Ac(\Mb;N)=\Ac(\Mb;N_t)$, where $N_t$ is a translation of $N$ to the future if $t>0$.  We will say that the state $\omega$ satisfies \emph{strong mixing} with respect to these translations if it obeys the timelike clustering condition
\begin{equation}
\omega((\nu_s A)B)- \omega(\nu_s A)\omega(B)\longrightarrow 0, \qquad (|s|\to\infty)
\end{equation}
for each fixed $A,B\in\Ac(\Mb)$. Strong mixing plays an important role in discussions of stability of KMS and ground states~\cite{BratRob,BratteliRobinson} and is known to hold for ground states in typical QFTs on Minkowski space~\cite{Maison:1968}. 

Clearly, if $\omega$ satisfies future-asymptotic clustering, then
the post-selected state conditioned on the observation of probe effect $B$ obeys
\begin{equation}\label{eq:return}
\omega'(\nu_s A) - \omega(\nu_s A) =  \frac{\omega((\nu_s A)\varepsilon_\sigma(B))}{\omega(\varepsilon_\sigma(B))}- \omega(\nu_s A) \longrightarrow 0 \qquad (|s|\to \infty).
\end{equation}
Thus, the distinction between the post-selected and original states is erased to arbitrary accuracy by waiting long enough. Consequently, experiments may be repeated without active re-preparation of the system state, if it is strongly mixing; similarly, the post-selected state is also well-approximated by $\omega$ sufficiently to the past of the coupling region.

More generally, suppose that $\Mb$ admits a time-orientation preserving isometry $\psi$, 
which is a symmetry of the theories $\Ac$ and $\Bc$, implemented by automorphisms $\alpha_\psi$ and $\beta_\psi$ of $\Ac(\Mb)$ and $\Bc(\Mb)$. Then any coupling $\Cc$ of $\Ac$ and $\Bc$ within a compact coupling region $K\subset M$ and scattering morphism $\Theta$ can be translated forwards by $\psi$ to give a modified coupling $\Cc'$ with coupling region $\psi(K)$ and scattering morphism $\Theta'=(\alpha_\psi\otimes\beta_\psi)\circ \Theta\circ (\alpha_\psi\otimes\beta_\psi)^{-1}$. 
Details are left to the reader; however, it is easily verified that our formalism is covariant
in the sense that
\begin{equation}
\varepsilon'_{\sigma'}(\beta_\psi B)=\alpha_\psi (\varepsilon_{\beta_\psi^*\sigma'}(B)).
\end{equation}

\section{A specific probe model}\label{sec:probe}

In this section we present a simple model of system-probe interaction in QFT, which is fully rigorous and explicitly solvable. Both the system and the probe are modelled by free scalar fields of possibly nonzero mass. They will be coupled linearly together in a bounded region of spacetime. The quantization of these systems (at least for sufficiently weak coupling) can follow standard lines and we will be fairly brief. See~\cite{FewVer:20xx,Baer:2015,AdvAQFT} for more details.

\paragraph{Classical action}
The combined action for the uncoupled systems on the spacetime $\Mb$ is
\begin{equation}
S_0 = \frac{1}{2}\int_M \dvol \left((\nabla_a \Phi)(\nabla^a\Phi) - m_\Phi^2 \Phi^2 +  (\nabla_b \Psi)(\nabla^b\Psi) - m_\Psi^2 \Psi^2\right),
\end{equation}
where $m_\Phi$ and $m_\Psi$ are the masses of the two fields. One could easily add
couplings to the curvature scalar but we omit this for simplicity.
The uncoupled field equations are thus
\begin{equation}
P\Phi = 0, \qquad Q\Psi= 0,
\end{equation}
where $P=\Box_\Mb+m_\Phi^2$ and $Q=\Box_\Mb+ m_\Psi^2$ are the Klein--Gordon operators. Standard theory~\cite{BarGinouxPfaffle} shows that $P$ and $Q$ have unique
advanced ($-$) and retarded ($+$) Green operators $E^\pm_P$ (and similarly for $Q$),
that is, continuous linear maps $E_P^\pm:\CoinX{M}\to C^\infty(M)$ obeying
\begin{equation}
E_P^\pm P f = f, \qquad P E_P^\pm f=f, \qquad  \supp E_P^\pm f\subset J^\pm(\supp f)
\end{equation}
for all $f\in\CoinX{M}$. Defining $E_P=E_P^--E_P^+$, 
every smooth solution $\Phi$ to $P\Phi=0$ with spatially compact support (i.e., support intersecting spacelike Cauchy surfaces compactly) can be expressed
in the form $\Phi=E_P f$ for some $f\in\CoinX{M}$. One can find an explicit
formula for a suitable $f$: let $\chi$ be a smooth function taking the value $1$ to the past of one Cauchy surface and vanishing to the future of another. Then 
$f=P\chi\Phi$ solves $E_P f=\Phi$ and is supported in the intersection of $\supp\Phi$ and the region between the Cauchy surfaces. Put another way, the identity $E_P=E_P P\chi E_P$ holds on $\CoinX{M}$. The possibility of generating any solution from a test function localised in an arbitrarily small neighbourhood of any Cauchy surface is related to the time-slice property. These above properties are common to a general class of \emph{Green-hyperbolic operators}~\cite{Baer:2015}. 

The systems will be coupled together via an interaction term 
\begin{equation}
S_{\text{int}}= - \int_\Mb\dvol\, \rho \Phi  \Psi,
\end{equation}
where $\rho$ is a real, smooth function with support contained in a compact set $K$. The field equation for the coupled system, with action $S=S_0 + S_{\text{int}}$, is
\begin{equation}\label{eq:eom}
\begin{pmatrix}
	P & R \\ R & Q
\end{pmatrix}\begin{pmatrix}
	\Phi\\ \Psi
\end{pmatrix}
=0,
\end{equation}
where $R\Phi=\rho\Phi$.  It will be convenient to write the matrix of operators as $T$ and to combine the fields as a function $\Xi\in C^\infty(M;\CC^2)$, writing the equation of motion as
\begin{equation}
T\Xi =0.
\end{equation}
The operator $T$ is also Green-hyperbolic, at least for sufficiently weak coupling~\cite{FewVer:20xx}, and so has advanced and retarded Green functions with analogous properties to those of $P$ and $Q$. As before, we will define the `in' ($-$) and `out' ($+$) regions by $M^\pm = M\setminus J^\mp(K)$. 

\paragraph{Quantization}
The quantization of the uncoupled systems is standard, and essentially the same methods can be used to deal with the coupled system. Let us start with the uncoupled field $\Phi$, which will be our `system'. The quantum field theory is described using
a $*$-algebra of observables $\Ac(\Mb)$ that is generated by a unit element together with other generators labelled $\Phi(f)$ ($f\in\CoinX{M}$), subject to the 
relations
\begin{itemize}
	\item[Q1] $f\mapsto \Phi(f)$ is $\CC$-linear
	\item[Q2] $\Phi(\overline{f})=\Phi(f)^*$ for all $f\in\CoinX{M}$
	\item[Q3] $\Phi(Pf)=0$ for all $f\in\CoinX{M}$
	\item[Q4] $[\Phi(f),\Phi(h)]=i E_P(f,h)\II$ for all $f,h\in\CoinX{M}$, where
	\begin{equation}
	E_P(f,h):=\int_M \dvol \, f E_Ph.
	\end{equation}
\end{itemize} 
The algebra $\Ac(\Mb)$ is known to be nontrivial and simple (see, e.g., \cite[\S 5.1]{FewVer:dynloc2}). Any Green hyperbolic operator may be quantized in the same way,
which applies in particular to the probe system (corresponding to $Q$), the uncoupled combined system-probe ($P\oplus Q$) and the coupled system ($T$). The latter two have generators labelled by test functions in $\CoinX{M;\CC^2}\cong \CoinX{M}\oplus\CoinX{M}$. 

The resulting QFT (for any Green-hyperbolic operator) obeys the general properties set out in Section~\ref{sec:prelim}. If $N$ is an open causally convex subset of $\Mb$, we define $\Ac(\Nb)$ following the above prescription for the region $N$ with the metric and causal structures induced from $\Mb$. Then the formula
\begin{equation}
\alpha_{\Mb;\Nb}\Phi_\Nb(f) = \Phi_\Mb(f),\qquad f\in\CoinX{N}
\end{equation}
extends to a morphism from $\Ac(\Nb)$ to $\Ac(\Mb)$ -- here we have temporarily decorated the smeared fields with a subscript to indicate whether they are elements of $\Ac(\Mb)$ or $\Ac(\Nb)$. The compatibility requirement~\eqref{eq:functorial} is obvious and the image $\Ac(\Mb;N)$ of $\alpha_{\Mb;\Nb}$ may be identified as the subalgebra of $\Ac(\Mb)$ generated by
$\Phi_\Mb(f)$ for $f\in\CoinX{N}$.  See~\cite{BrFrVe03,FewVerch_aqftincst:2015} for details on these and other properties. If $N$ contains a Cauchy surface of $\Mb$ then by choosing $\chi\in C^\infty(M)$ appropriately we may write any $\Phi_\Mb(f)=\Phi_\Mb(P\chi E_P f)$ with $P\chi E_P f\in \CoinX{N}$, thus establishing the time-slice property. 
  
Einstein causality holds owing to the support properties of the Green functions, 
while the Haag property is proved in Appendix~\ref{appx:Haagduality} for reference, as we do not know of a proof in the purely $*$-algebraic framework (see, e.g.,~\cite{Camassa:2007} and references therein for the von Neumann algebraic approach to Haag duality for spacetimes that are parts of Minkowski spacetime).

The uncoupled probe system is defined in exactly the same way, replacing $P$ and $E_P$ by $Q$ and $E_Q$, and gives an algebra $\Bc(\Mb)$, whose generators will be denoted $\Psi(f)$ for $f\in\CoinX{N}$. As in our general discussion, the system and probe may be treated as a single uncoupled system with algebra $\Ac(\Mb)\otimes\Bc(\Mb)$, where $\otimes$ denotes the algebraic tensor product.
(This algebra was denoted $\Uc(\Mb)$ in the Introduction.) Equivalently, this may be regarded as the quantization of the Green-hyperbolic operator $P\oplus Q$, with generators $\Xi_0(F)$ labelled by $F\in \CoinX{M;\CC^2}\cong \CoinX{M}\oplus\CoinX{M}$. The correspondence with the tensor product form is given by
\begin{equation}
\Xi_0(f\oplus h)= \Phi(f)\otimes \II_{\Bc(\Mb)} + \II_{\Ac(\Mb)}\otimes \Psi(h).
\end{equation}

Finally, the quantized coupled field--probe system has an algebra $\Cc(\Mb)$ obtained 
in the same way, using $T$ and the corresponding advanced-minus-retarded solution operator $E_T$, and writing the generators as $\Xi(F)$ 
for $F\in \CoinX{M;\CC^2}$. As the operator $T$ agrees with $P\oplus Q$ outside $K$, the restriction of $E_T$ to any causally convex region $L$ contained in $K^\perp$ agrees with the restriction of $E_{P\oplus Q}$. In consequence, the algebras
$\Ac(\Lb)\otimes\Bc(\Lb)$ and $\Cc(\Lb)$ are isomorphic under the map defined by 
\begin{equation}
\chi_\Lb(\Xi_{0,\Lb}(F)) = \Xi_\Lb(F), \qquad (F\in\CoinX{L})
\end{equation}
where we again decorate fields with subscripts to indicate the relevant spacetime. 
This definition is clearly compatible with the requirements summarised in the diagram~\eqref{eq:naturality}. 

\paragraph{Scattering morphisms} The scattering morphism $\Theta$ of our model is obtained by comparing the quantized dynamics of $T$ with that of $P\oplus Q$. By the time-slice property, it is enough to specify the action of $\Theta$ on a generator $\Xi_0(F)$ of $\Ac(\Mb)\otimes\Bc(\Mb)$ where $F$ is compactly supported in $M^+$. A standard argument (see e.g.,~\cite{FewVerch_aqftincst:2015}) can be adapted to give the result in the following form: 
 \begin{equation}\label{eq:rcegen_text}
 \Theta\Xi_0(F) = \Xi_0(F - \tilde{R} E_T F) = \Xi_0(F - \tilde{R} E_T^-F )
 \end{equation}
 for all $F\in\CoinX{M^+;\CC^2}$, where
 \begin{equation}
 \tilde{R} = T-P\oplus Q = \begin{pmatrix}
 0 & R \\ R & 0
 \end{pmatrix}.
 \end{equation}
 For completeness,~\eqref{eq:rcegen_text} is proved in Appendix~\ref{appx:rce}, where it is also established that the causal factorisation formula~\eqref{eq:Bogoliubov} holds when
 considering two probe systems with causally orderable coupling regions.

\section{Application of the measurement scheme to the detector model}\label{sec:applic}

\subsection{Induced observables}

We are now in a position to apply the general analysis of measurement schemes
to our field--probe system. In particular, we will compute the induced system
observables $\varepsilon_\sigma(B)$ for various choices of probe observable $B\in \Bc(\Mb)$ and also show how the product on $\Bc(\Mb)$ can be deformed to make
$\varepsilon_\sigma$ a (noninjective) $*$-homomorphism. Some aspects of the general analysis cannot be illustrated with our example, as quantised above, because the algebras do not contain nontrivial effects. This can be addressed by following a $C^*$-algebraic quantisation and is left for a separate study elsewhere.

The first step is to compute $\Theta (\II\otimes \Psi(h))$ for
a general $h\in\CoinX{M}$, which may be assumed for convenience to be supported in $M^+$ (failing which, we use the time-slice property to replace $h$ by a test function supported in $M^+$ that gives the same smeared field). Then~\eqref{eq:rcegen_text} gives 
\begin{equation}\label{eq:Theta_act}
\Theta(\II\otimes \Psi(h)) = \Phi(f^-)\otimes\II + \II\otimes \Psi(h^-),
\end{equation}
where
\begin{equation}\label{eq:fin}
\begin{pmatrix}  f^- \\ h^- \end{pmatrix}
=\begin{pmatrix} 0 \\ h \end{pmatrix}
-\begin{pmatrix} 0 & R \\ R & 0 \end{pmatrix}   E_T^-
\begin{pmatrix} 0 \\ h \end{pmatrix} \qquad (h\in\CoinX{M^+}).
\end{equation}
Clearly, $f^-$ is supported within $\supp \rho$, while $h^-$ is compactly supported within $M^+\cup\supp\rho$. 
Note also that replacing $h$ by $h+Qh'$, with $h'\in\CoinX{M^+}$, leaves $f^-$ unchanged and modifies $h^-$ to $h^-+Qh'$,
leaving $\Psi(h^-)$ unchanged. In this way one sees that~\eqref{eq:Theta_act} depends on $h$ only via $\Psi(h)$, provided $h\in\CoinX{M^+}$.
We immediately obtain $\Theta(\II\otimes \Psi(h)^n)$ because $\Theta$ is a homomorphism. These results may be summarised by the identities
\begin{equation}
\Theta (\II\otimes e^{i\Psi(h)}) = e^{i\Phi(f^-)}\otimes e^{i\Psi(h^-)}
\end{equation}
between formal power series in $h\in\CoinX{M^+}$. The induced observables are now easily computed: if $\sigma$ is any probe state, then
\begin{equation}\label{eq:linear_equiv}
\varepsilon_\sigma(\Psi(h)) =
 \eta_\sigma(\Theta(\II\otimes  \Psi(h)))
= \Phi(f^-) + \sigma(\Psi(h^-))\II
\end{equation}
and more generally we may compute all $\varepsilon_\sigma(\Psi(h)^n)$ from the 
formal power series expression
\begin{equation}\label{eq:genfn_equiv}
\varepsilon_\sigma(e^{i\Psi(h)}) = \eta_\sigma(\Theta(\II\otimes e^{i\Psi(h)}))
=
\sigma(e^{i\Psi(h^-)}) e^{i\Phi(f^-)}.
\end{equation}
For example,~\eqref{eq:linear_equiv} is the first order term in this expansion, while the second order in $h$ gives
\begin{equation}\label{eq:quadratic_equiv}
\varepsilon_\sigma(\Psi(h)^2) = \Phi(f^-)^2 + \sigma(\Psi(h^-)^2)\II +2\sigma(\Psi(h^-))\Phi(f^-),
\end{equation}
which may be confirmed by direct calculation.
Other induced observables can be obtained by taking suitable functional derivatives of the above expressions.

We can use the results just obtained to quantify the additional unsharpness introduced by the measurement scheme. The variance of the measured observable
$\widetilde{\Psi(h)}$ in the state $\utilde{\omega}_\sigma$ is
\begin{align}
\Var(\widetilde{\Psi(h)};\utilde{\omega}_\sigma) &= \utilde{\omega}_\sigma\left(\widetilde{\Psi(h)}^2\right) - 
\utilde{\omega}_\sigma\left(\widetilde{\Psi(h)}\right)^2 \notag\\
&=
\omega(\varepsilon_\sigma(\Psi(h)^2))- \omega(\varepsilon_\sigma(\Psi(h)))^2 \notag\\
&=
\omega(\Phi(f^-)^2)+ 2\omega(\Phi(f^-))\sigma(\Psi(h^-))+ \sigma(\Psi(h^-)^2)
\notag \\
&\qquad\qquad -\left(\omega(\Phi(f^-))^2 
+ 2\omega(\Phi(f^-))\sigma(\Psi(h^-)) + \sigma(\Psi(h^-))^2\right)\notag\\ 
&= \Var(\Phi(f^-);\omega)+\Var(\Psi(h^-);\sigma),
\end{align}
which clearly shows that the additional variance can be attributed to fluctuations
in the probe. Indeed, because $\tau^+$ in~\eqref{eq:Btilde} is a homomorphism, we deduce that
\begin{equation}
	\utilde{\omega}_\sigma(e^{i\widetilde{\Psi(h)}}) = 	\utilde{\omega}_\sigma(\widetilde{e^{i\Psi(h)}}) = \omega(\varepsilon_\sigma(e^{i\Psi(h)})) = \sigma(e^{i\Psi(h^-)}) \omega(e^{i\Phi(f^-)}).
\end{equation}
Therefore, if $\omega$ and $\sigma$ are sufficiently regular that $\lambda\mapsto 
	\omega(e^{i\lambda\Phi(f^-)})$ and 
	$\lambda\mapsto \sigma(e^{i\lambda\Psi(h^-)})$ are characteristic 
	functions of probability measures for measurements of $\Phi(f^-)$ in state $\omega$ and $\Psi(h^-)$ in state $\sigma$, then the probability measure for
	measurement outcomes of $\widetilde{\Psi(h)}$ in state $\utilde{\omega}_\sigma$ 
	has characteristic function 
	$\lambda\mapsto \utilde{\omega}_\sigma(e^{i\lambda\widetilde{\Psi(h)}})$ and is therefore
	the convolution of these measures. This quantifies precisely the way in which the actual measurement is less sharp than the induced observable, and clearly attributes its origin to fluctuations in the probe.

An important class of special cases arises when $\sigma$ is a quasifree state with (possibly nonvanishing) one-point function and truncated two-point function
\begin{align}\label{eq:qfVS}
V(h_1)&=\sigma(\Psi(h_1)) \\
S(h_1,h_2)&= \sigma(\Psi(h_1)\Psi(h_2))-V(h_1)V(h_2).
\end{align} 
In this case,~\eqref{eq:genfn_equiv} becomes
\begin{equation}\label{eq:genfunQF}
\varepsilon_\sigma(e^{i\Psi(h)}) =   e^{iV(h^-)-S(h^-,h^-)/2} e^{i\Phi(f^-)},
\end{equation}
while~\eqref{eq:linear_equiv} and~\eqref{eq:quadratic_equiv} simplify to 
\begin{align}
\varepsilon_\sigma(\Psi(h)) &=  \Phi(f^-) + V(h^-)\II \\
\varepsilon_\sigma(\Psi(h)^2) &= \Phi(f^-)^2 +2V(h^-)\Phi(f^-)+ \left(S(h^-,h^-)+V(h^-)^2\right)\II. 
\end{align}

As described in Section~\ref{sec:genscheme}, $\varepsilon_\sigma$ 
maps probe observables to induced observables of the system. 
The map is not injective: if $h$ is supported outside $J^+(K)$,
$E_T^- (0\oplus h)$ vanishes on $\supp\rho\subset K$. Thus $f^-=0$ and the formal series
$\varepsilon_\sigma(e^{i\Psi(h)})$ is a multiple of the unit; 
therefore the same is true of $\varepsilon_\sigma(\Psi(h)^n)$ for each $n$.
The correspondence established by $\varepsilon_\sigma$ is not an algebra homomorphism either. However, by deforming the product on $\Bc(\Mb)$ one can make $\varepsilon_\sigma$ into a homomorphism, which is convenient for understanding
the nature of the correspondence. Again, it is useful to use generating functions to express the modified product, which will be denoted by a $\star$. 
As formal series in $h,h'\in\CoinX{M^+}$, we compute
\begin{align}
\varepsilon_\sigma(e^{i\Psi(h)})\varepsilon_\sigma(e^{i\Psi(h')})  
&=\sigma(e^{i\Psi(h^-)})\sigma(e^{i\Psi(h^{\prime -})}) e^{i\Phi(f^-)} e^{i\Phi(f^{\prime -})} \notag\\
&= \sigma(e^{i\Psi(h^-)})\sigma(e^{i\Psi(h^{\prime -})})  e^{-iE_P(f^-,f^{\prime -})/2}e^{i\Phi(f^-+f^{\prime -})} \notag\\
&=G(h^-+h^{\prime -}) G(h^-)^{-1}G(h^{\prime -})^{-1}
e^{-iE_P(f^-,f^{\prime -})/2} \varepsilon_\sigma(e^{i\Psi(h+h')})\label{eq:prestar}
\end{align}
where $G(h)=\sigma(e^{i\Psi(h)})^{-1}$ with the inverse computed as a formal power series. This
inverse exists and is uniquely determined because $\sigma(e^{i\Psi(h)})=1+O(h)$;\footnote{This marks a point at which it is important that we study formal series. There are states on the Weyl algebra (see e.g., the tracial state described in~\cite[\S 2.1]{Fewster_artofthestate:2018}) for which some or all Weyl generators other than the unit have vanishing expectation value.} furthermore, one has the identities
\begin{equation}
G(\lambda (h+Qh'))=G(\lambda h), \qquad G(\lambda\bar{h})= \overline{G(-\lambda h)}
\end{equation}
of formal series in $\lambda\in\RR$ for all $h,h'\in\CoinX{M}$. 
By the comments following~\eqref{eq:fin} it is clear that the right-hand side of~\eqref{eq:prestar}
depends on $h,h'$ only via $\Psi(h)$ and $\Psi(h')$, and so we may define a deformed product on $\Bf(\Mb^+)=\Bf(\Mb)$ by the identity 
\begin{align}\label{eq:star}
e^{i\Psi(h)} \star e^{i\Psi(h')} &=   G(h^-+h^{\prime -}) G(h^-)^{-1}G(h^{\prime -})^{-1} e^{-i E_P(f^-,f^{\prime -})/2}  e^{i\Psi(h+h')}
\end{align}
of formal series in $h,h'\in\CoinX{M^+}$. One easily checks that the $\star$-product is compatible with the $*$-operation, in the sense that
\begin{equation}
\left( e^{i\Psi(h)} \star e^{i\Psi(h')}\right)^* = \left( e^{i\Psi(h')}\right)^* \star \left(e^{i\Psi(h)}\right)^*.
\end{equation}
The deformed star product is, in fact, gauge-equivalent (in the sense of~\cite{Kontsevich:2003}) to the product on the algebra obtained by quantising the solution space $\CoinX{M}/Q\CoinX{M}$ with respect to the pre-symplectic form  
\begin{equation}
\nu([h],[h']) := E_P(f^-,f^{\prime -}),\qquad h,h'\in\CoinX{M^+}.
\end{equation}
As a particular example, we may consider the situation where $\sigma$ is a quasi-free state  
described by~\eqref{eq:qfVS}. Then the formal series $G(h)$ converges, $G(h)= \exp(-iV(h)+S(h,h)/2)$, whereupon the $\star$-product is defined by
\begin{align}
e^{i\Psi(h)} \star e^{i\Psi(h')} &=   e^{S_{\sym}(h^-,h^{\prime -})-i E_P(f^-,f^{\prime -})/2}  e^{i\Psi(h+h')}
\end{align}
for $h,h'\in\CoinX{M^+}$, where $S_\sym=\frac{1}{2}(S + S^t)$ is the symmetric part of the probe's truncated two-point function. 

Returning to the general case, the existence of the $\star$-product implies at once that the subspace of induced system observables $\varepsilon_\sigma(\Bc(\Mb))$ is in fact a $*$-subalgebra 
of $\Af(\Mb)$, because it is the image of a $*$-homomorphism. 
An important point is that causal disjointness does not imply $\star$-commutativity. Indeed, one has
\begin{equation}
[\Psi(h),\Psi(h')]_\star = i E_P(f^-,f^{\prime -})\II,
\end{equation}
which reflects the fact that induced system observables of causally disjoint probe measurements do not necessarily commute. Physically, the interaction can create correlations between degrees of
freedom of the probe that are then observed at spacelike separation.  We remark that the occurrence of `deformed' operator products in typical measurement interactions has been observed elsewhere \cite{Andersson:2013},  cf.\ also \cite{BuschLahti-StandModel}.

\subsection{Localisation of induced observables}

On general grounds (Theorem~\ref{thm:localisation}), the induced observable $\varepsilon_\sigma(\Psi(h))$ may be localised in any connected open causally convex neighbourhood of the causal hull of the interaction region. 
This may be seen explicitly in our concrete example, using~\eqref{eq:fin} to write $f^-$ as a product
\begin{equation}
f^- = -\rho (E_T^- (0\oplus h))_2,
\end{equation}
where the subscript denotes the second component and we assume $h\in\CoinX{M^+}$. Any localisation region for $\Phi(f^-)$ is a localisation region for $\varepsilon_\sigma(\Psi(h)^n)$ ($n\in\NN_0$). In particular, this applies to any open causally convex region containing $\supp \rho\cap J^-(\supp h)$. More generally, every induced system observable can be localised in any causally convex neighbourhood of $\supp \rho$; informally we might regard the causal hull of the coupling region as providing a common minimal localisation region.\footnote{Using the
time-slice property these observables can of course be localised in any causally convex neighbourhood of any Cauchy surface. The important point here is that
they {\em can} be localised in neighbourhoods of the coupling region.} 

One might wonder whether a tighter localisation
is possible. For instance, one might be tempted to say that $\varepsilon_\sigma(\Psi(h))=\Phi(f^-)$ may be localised in $\supp f^-$, 
even if the support is not causally convex. As we now argue, however, certain properties of $\varepsilon_\sigma(\Psi(h))$ are sensitive 
to the geometry of the whole causal hull of $\supp f^-$, so it does not seem useful to assert any tighter localisation.
 
A fundamental question is whether two induced observables, e.g., $\varepsilon_\sigma(\Psi(h))$ and $\varepsilon_\sigma(\Psi(h'))$ are compatible or not. Computing their commutator, we find
\begin{align}\label{eq:comm}
[\varepsilon_\sigma(\Psi(h)),\varepsilon_\sigma(\Psi(h'))] &= 
[\Phi( f^-),\Phi( f^{\prime -})] 
= 
iE_P( f^-,  f^{\prime -})\II .
\end{align}
Now the right-hand side of this equation is sensitive to changes in the geometry in
\begin{equation}
S = (J^+(\supp f^-)\cap J^-(\supp f^{\prime -} ))\cup (J^-(\supp f^-)\cap J^+(\supp f^{\prime -} )).
\end{equation}
Fixing $h$, there will (in generic cases) be other test functions $h'\in\CoinX{M^+}$ so that $f^{\prime -}$ and $f^-$ have the same support, in which case $S$ coincides with the causal hull of $\supp  f^-$
and may contain points that are outside $\supp  f^-$ or even $\supp\rho$. Given that the 
question of compatibility of the two induced observables can be sensitive to the geometry outside $\supp f^-$ (unless it is already causally convex) it would seem inappropriate to declare that they are, nonetheless, local to that region.
We see that a naive `off-shell' localisation provided by the support of the test function can be misleading;
it would also be incompatible with local covariance~\cite{BrFrVe03,FewVerch_aqftincst:2015}.
By contrast, the right-hand side of \eqref{eq:comm} is insensitive to changes in the geometry outside the causal hull of $\supp\rho$, for all $h,h'$, and so the causal hull is a 
legitimate localisation region for arbitrary induced observables.

For illustrative purposes, if the coupling were singularly supported along a timelike curve segment $\gamma:[0,\tau]\to M$, then the minimal localisation region 
would be $J^+(\gamma(0))\cap J^-(\gamma(\tau))$. In the specific example of an eternally uniformly accelerated
probe in Minkowski spacetime, as in the traditional Unruh--deWitt analysis, the localisation region is an entire wedge region (and not, for example, the curve itself). 

Summarising, while the induced observables are all localised in (neighbourhoods of) the causal hull of the coupling region, there is in general no concept of their localisation that is simultaneously tighter and useful.

\subsection{Perturbative treatment of the detector response}

It is natural to consider measurements in which the coupling between the system and probe is kept weak to minimise disruption to the system, and the weak-coupling regime has often been studied in 
standard treatments of the detector system~\cite{Unruh:1976,DeWitt:1979}. To conclude, we therefore consider this regime in our model. 

First note that the Green operator $E_T^-$ obeys the equation
\begin{align}
E_T^-F &= E_{P\oplus Q}^-F - E_{P\oplus Q}^-\tilde{R}E_T^- F \notag\\
&= E_{P\oplus Q}^-F - E_{P\oplus Q}^-\tilde{R}E_{P\oplus Q}^- F + 
E_{P\oplus Q}^-\tilde{R}E_{P\oplus Q}^-\tilde{R}E_T^- F ,
\end{align}
where the second line arises by substituting the first back into itself and
the first on rewriting the equation $T\Xi=F$ as $(P\oplus Q)\Xi=F-\tilde{R}\Xi$ and using uniqueness of solution with future compact support~\cite{Baer:2015} (see Appendix~\ref{appx:rce} for the definition of `future compact').  Now replacing $\rho$ by $\lambda\rho$, with $|\lambda|\ll 1$, we obtain a Born expansion
\begin{equation}\label{eq:Born}
E_{T}^-F = E_{P\oplus Q}^-F - \lambda E_{P\oplus Q}^-\tilde{R}E_{P\oplus Q}^- F + 
\lambda^2 E_{P\oplus Q}^-\tilde{R}E_{P\oplus Q}^-\tilde{R}E_{P\oplus Q}^- F + O(\lambda^3), 
\end{equation}
where $\tilde{R}$ is given in terms of $\rho$ using the formulae above. 
In fact, because
\begin{equation}
\tilde{R}E_{P\oplus Q}^- = \begin{pmatrix}
0 & R E_Q^- \\ R E_P^- & 0
\end{pmatrix}
\end{equation}
is off-diagonal,~\eqref{eq:fin} reduces to
\begin{equation}
\begin{pmatrix}
f^- \\ h^-
\end{pmatrix} = \begin{pmatrix}
0 \\ h \end{pmatrix} - \lambda \tilde{R}
E_{T}^-\begin{pmatrix}
0 \\ h \end{pmatrix} =
\begin{pmatrix}
-\lambda \rho E_Q^- h + O(\lambda^3)\\
h + \lambda^2 \rho E_P^-\rho E_Q^- h +O(\lambda^4)
 \end{pmatrix} ,
\end{equation}
for $h\in\CoinX{M^+}$, so
\begin{align}
\Phi(f^-) & = -\lambda \Phi(\rho E_Q^- h) + O(\lambda^3) \\
\Psi(h^-) & = \Psi(h + \lambda^2 \rho E_P^-\rho E_Q^- h) + O(\lambda^4)  .
\end{align}

Therefore, assuming that $\sigma$ has vanishing-one-point function and $h\in\CoinX{M^+}$,
\begin{equation}
\varepsilon_\sigma(\Psi(h)^*\Psi(h)) = \Phi(f^-)^* \Phi(f^-)
+\sigma(\Psi(h^-)^*\Psi(h^-))\II
\end{equation}
has an expectation value
\begin{equation}\label{eq:Born_result}
\omega(\varepsilon_\sigma(\Psi(h)^*\Psi(h))) = S(\overline{h},h)  +  \lambda^2 \left(
W(\overline{h_1},h_1) + 2\Re S(\overline{h},h_2) \right) + O(\lambda^4),
\end{equation}
where $S$ and $W$ are the two-point functions of $\sigma$ and $\omega$, and 
\begin{equation}
h_1 = \rho E_Q^- h, \qquad
h_2 = \rho E_P^- \rho E_Q^- h 
\end{equation}
are compactly supported within $\supp\rho$. In this expression, the 
lowest order term describes the spontaneous excitation of the probe in the absence of any coupling and can be regarded as background noise. 

Using this result, we can make some contact with the traditional analysis of the
Unruh--deWitt detector. (See also Smith's treatment in Sec.~3.3 of his thesis~\cite{Smith_thesis}.) Suppose that the GNS representation of the probe induced by $\sigma$ is a Fock representation and that $h$ is chosen so that $\Psi(h)$ closely approximates an annihilation operator in this representation. For purposes of exposition, let us suppose it actually \emph{is} an annihilation operator, whereupon
the two terms in~\eqref{eq:Born_result} involving $S$ both vanish, and the left-hand side is actually the expectation of a number operator $N_h$ for the mode annihilated by $\Psi(h)$ (up to normalisation). This gives
\begin{equation}
\omega(\varepsilon_\sigma(N_h)) = \lambda^2 W(\overline{h_1},h_1) + O(\lambda^4)
\end{equation}
Now let $\rho$ become concentrated along a timelike worldline $\gamma(\tau)$, in a proper time parametrization, so that $\rho$ becomes a compactly supported distribution acting by
\begin{equation}
\rho (f) = \int_\RR f(\gamma(\tau)) \tilde{\rho}(\tau)\, d\tau ,\qquad f\in C^\infty(M)
\end{equation}
for some smooth $\tilde{\rho}$, which we take to be real-valued. In this limit, one obtains
\begin{equation}
\omega(\varepsilon_\sigma(N_h)) = \lambda^2 
 \left((\gamma\times\gamma)^*W\right)(\tilde{\rho} \gamma^* E_Q^-\overline{h},\tilde{\rho}\gamma^*E_Q^-h) + O(\lambda^4).
\end{equation}  
Supposing further that $\gamma^* E_Q^-h$ agrees with $e^{iE\tau}$ on $\supp\tilde{\rho}$, we have 
the final answer
\begin{equation}
\omega(\varepsilon_\sigma(N_h)) = \lambda^2 \int d\tau\,d\tau'\, e^{-iE(\tau-\tau')}\tilde{\rho}(\tau)\tilde{\rho}(\tau') W(\gamma(\tau),\gamma(\tau')) + O(\lambda^4).
\end{equation}  
The various approximations and limits employed here are made to approximate the interaction of the field with a two-level quantum mechanical system with an energy gap $E$ (represented by a specific mode of the probe field in our model). A more direct treatment of the Unruh--deWitt system will be given elsewhere. Nonetheless, we may observe that 
the coefficient of $\lambda^2$ the response function of a switched
Unruh--deWitt detector, as computed using standard first order perturbation theory (see, e.g.,~\cite[\S 3]{FewsterJuarezAubryLouko:2016}). 
The calculation above can and will be discussed in more detail elsewhere,
in the context of an appraisal of the Unruh effect in the light of our formalism,
and also the previous results of 
\cite{Unruh:1976,GroveOttewill:1983,UnruhWald-what,deBievreMerkli:2006,CrispinoHiguchiMatsas:2008,FewsterJuarezAubryLouko:2016} among others. However we can already see that our formalism
provides a more local viewpoint on the detector. The title of~\cite{FewsterJuarezAubryLouko:2016} was `Waiting for Unruh': the results above indicate \emph{where} one should wait, namely in the 
intersection of $M^+$ and $J^+(K)$.

\section{Conclusions}\label{sect:conc}

We have proposed a general framework for quantum measurement within the local covariant setting of quantum field theory. Thereby,
we have reconciled general relativity and quantum field theory with quantum measurement theory --- at least partly, since spacetime and its causality structure 
enters as a given background, so that the process of measurement doesn't have any dynamical influence on the spacetime. The central 
element of our approach is the localized dynamical coupling of the `system' and the `probe'. As indicated, under very general and 
natural assumptions, this coupling gives rise to a scattering morphism, so that the measurement interaction by the probe on the 
system can be seen as subjecting the system to a scattering process, depending on the probe observable measured and the initial
state in which the probe has been prepared prior to measurement. This is very much in the spirit of the operational approach to 
quantum measurement~\cite{Busch_etal:quantum_measurement}
and realizes a measurement scheme;
however we have focussed on the induced observables of the system for given coupling, probe observables and initial probe states, 
rather than the converse problem of finding a coupled system and probe observable corresponding to a given system observable (cf.~\cite{Ozawa:1984,OkamuraOzawa}).
The converse problem is left open for further study, but in our view it is of particular interest to understand what can be measured using physically realisable models.
We have shown that 
the localization properties of the induced observables in our scheme are determined by the coupling regions between system and probe: the induced observable is 
localized in the causal hull of the coupling region. 

In a further step, we have formulated the state change of the system consequent upon a probe measurement in terms of instruments depending on the measured probe observable
and initial probe state, which can be used to describe both selective or non-selective measurements. We have argued that, in general, 
the post-selected state differs everywhere in spacetime from the original system state, in view of the genericity of system states that have the Reeh--Schlieder property.
By contrast,
the updated state following non-selective measurement agrees with the original state in the causal complement of the coupling region. 
Nevertheless, we have also shown that the causally ordered composition of a pair of instruments coincides with the instrument of the composed measurement, in situations where the second coupling region does not precede the first, and assuming a 
causally factorizing scattering morphism. The 
order of composition of instruments is therefore irrelevant if the coupling regions are causally disjoint. This result,
Thm.~\ref{thm:causalcomp}, shows the consistency of the framework with the principles of locality and measurement. It also renders moot the discussion where and when 
a state change of the system takes place as consequence of a measurement, and we see it as the core result of the present article.

Furthermore, in the example consisting of two linear quantized fields brought to interaction by a localized coupling function, the induced system observables
can be described precisely for simple probe observables. 
The results agree, in suitable limits, with the perturbative treatment of a switched Unruh-deWitt detector (see, e.g.,~\cite{FewsterJuarezAubryLouko:2016}). 
The localisation of induced observables has been given in terms of causally convex subsets within the causal hull of the coupling region, and we have pointed out that there is no
viable sharper concept of localization. The causal hull of the coupling region is therefore the 
minimal common localization region for all the induced system observables.

The framework discussed in this article promises to have further applications and to admit extensions. It should, in particular, shed light on the localization 
properties of induced observables of measurements conducted with Unruh--deWitt detectors in arbitrary motion, and lead to a structural understanding of the relation
of the coupling and the spacetime structure to the state of the detector (the probe) after measurement. For example, one can speculate whether 
the eventual thermality of the probe is linked to the existence of a non-void causal complement for its trajectory.  Our framework has already been applied to the issue of measurement and causality in a discussion of so-called `impossible measurements'~\cite{BostelmannFewsterRuep:2020}.
It may be that it can also shed light on discussions of quantum theory where some causality violation is permitted, as has been discussed in quantum information contexts~\cite{TolksdorfVerch:2018}. In an extended framework, the influence of the measurement process on the spacetime structure ought to be taken into account. 
Furthermore, one may expect that extensions of the framework might ultimately provide hints towards what is needed to discuss measurement 
in the context of quantum gravitational theories, where not only the system but also the spacetime and its causal structure have to
be inferred from the results of measurements. 

\bigskip\bigskip
\begin{footnotesize}
	\noindent{\bf Acknowledgements}
	We both thank the Mathematisches Forschungsinstitut Oberwolfach and the organisers of the workshop `Recent Mathematical Developments in Quantum Field Theory' (July 2016) 
	at which this work commenced, and the Banff International Research Station and the Max Planck Institute for Mathematics in the Sciences along with the respective organisers of the workshops
	`Physics and Mathematics of Quantum Field Theory' (July/August 2018) and `Progress and Visions in Quantum Theory in View of Gravity' (October 2018) during which it progressed. 
	CJF thanks the Institute for Theoretical Physics at the University of Leipzig and the Institute of Mathematics at the University of Potsdam for hospitality during 
	part of the course of this work, while RV reciprocally thanks the Department of Mathematics at the University of York for its hospitality and financial support. 
	We both thank Dorothea Bahns, 
	Henning Bostelmann, Detlev Buchholz, Roger Colbeck, J\"{u}rg Fr\"{o}hlich, Eli Hawkins, 
	Leon Loveridge, Albert Georg Passegger, Kasia Rejzner, Maximilian Ruep, Yoh Tanimoto and Robert Wald 
	for remarks and questions that have helped us to clarify our presentation.\par
\end{footnotesize}

\appendix
\section{General properties of the scattering morphism}\label{appx:genscattering}

We prove Proposition~\ref{prop:scattering}, which restate here in terms of morphisms. The notation established in Sec.~\ref{sec:abstract} is used freely.
\begin{proposition}
	(a) If $\hat{K}$ is any compact set containing the coupling region $K$, let $\hat{\Theta}$ be the morphism obtained if one replaces $K$ by $\hat{K}$ in the construction of the scattering morphism of a given coupled theory. Then $\hat{\Theta}=\Theta$.
	(b) If $L$ is an open causally convex subset of the causal complement $K^\perp = M^+\cap M^-$ of $K$, then $\Theta$ acts trivially on $\Uc(\Mb;L)=\Ac(\Mb;L)\otimes \Bc(\Mb;L)$, i.e.,
	\begin{equation}
	\Theta \circ (\alpha_{\Mb;\Lb}\otimes\beta_{\Mb;\Lb}) = \alpha_{\Mb;\Lb}\otimes\beta_{\Mb;\Lb}.
	\end{equation}
	(c)	Suppose that $L^+$ (resp. $L^-$) is an open causally convex subset of $M^+$ (resp., $M^-$), and that $L^+\subset D(L^-)$. Then $\Theta\circ (\alpha_{\Mb;\Lb^+}\otimes\beta_{\Mb;\Lb^+})$ factors via
	$\alpha_{\Mb;\Lb^-}\otimes\beta_{\Mb;\Lb^-}$, i.e., $\Theta \Uc(\Mb;L^+) \subset \Uc(\Mb;L^-)$. 
\end{proposition}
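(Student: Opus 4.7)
My plan is to prove all three parts by systematic application of the definition $\tau^\pm = \kappa^\pm \circ (\alpha^\pm \otimes \beta^\pm)^{-1}$, the naturality diagram~\eqref{eq:naturality}, and the functoriality~\eqref{eq:functorial} of the morphisms. For part (a), if $\hat{K}\supset K$ is compact, set $\hat{M}^\pm := M \setminus J^\mp(\hat{K})$; this is an open causally convex subset of $M^\pm$ still containing a Cauchy surface of $\Mb$, and both $\hat{M}^\pm$ and $M^\pm$ lie outside $J^+(K)\cap J^-(K)$. Naturality gives $\chi_{\Mb^\pm}\circ(\alpha_{\Mb^\pm;\hat{\Mb}^\pm}\otimes\beta_{\Mb^\pm;\hat{\Mb}^\pm})=\gamma_{\Mb^\pm;\hat{\Mb}^\pm}\circ\chi_{\hat{\Mb}^\pm}$, and combined with functoriality this yields $\hat{\kappa}^\pm = \kappa^\pm\circ(\alpha_{\Mb^\pm;\hat{\Mb}^\pm}\otimes\beta_{\Mb^\pm;\hat{\Mb}^\pm})$. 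The inverse embedding factor in $\hat{\tau}^\pm$ absorbs the extra morphism to give $\hat{\tau}^\pm=\tau^\pm$, and hence $\hat{\Theta}=\Theta$.

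For part (b), the region $L\subset K^\perp=M^+\cap M^-$ sits simultaneously inside $M^+$ and $M^-$. A direct computation using functoriality and naturality shows
\[
\tau^\pm\circ(\alpha_{\Mb;\Lb}\otimes\beta_{\Mb;\Lb}) = \kappa^\pm\circ(\alpha_{\Mb^\pm;\Lb}\otimes\beta_{\Mb^\pm;\Lb}) = \gamma_{\Mb;\Lb}\circ\chi_{\Lb},
\]
whose right-hand side is independent of the sign. Hence $\tau^+$ and $\tau^-$ agree on the image of $\alpha_{\Mb;\Lb}\otimes\beta_{\Mb;\Lb}$, and $\Theta=(\tau^-)^{-1}\circ\tau^+$ acts as the identity on $\Uc(\Mb;L)$.

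For part (c) I would proceed in three algebraic steps and one geometric input. The computation of (b) adapted to $L^+\subset M^+$ gives $\tau^+\circ(\alpha_{\Mb;\Lb^+}\otimes\beta_{\Mb;\Lb^+}) = \gamma_{\Mb;\Lb^+}\circ\chi_{\Lb^+}$, so $\tau^+\Uc(\Mb;L^+)\subset\Cc(\Mb;L^+)$. Applied to $L^-\subset M^-$ and inverted, it shows that $(\tau^-)^{-1}$ maps $\Cc(\Mb;L^-)$ into $\Uc(\Mb;L^-)$, explicitly via $(\tau^-)^{-1}\circ\gamma_{\Mb;\Lb^-} = (\alpha_{\Mb;\Lb^-}\otimes\beta_{\Mb;\Lb^-})\circ\chi_{\Lb^-}^{-1}$. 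It remains to establish the key inclusion $\Cc(\Mb;L^+)\subset\Cc(\Mb;L^-)$; composing all three ingredients then gives $\Theta\Uc(\Mb;L^+)\subset\Uc(\Mb;L^-)$. By the time-slice property of $\Cc$, the required inclusion follows once I exhibit an open causally convex region $N\supset L^+\cup L^-$ such that $L^-$ contains a Cauchy surface of $\Nb$.

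The main obstacle is the construction of $N$. My plan is to fix a spacelike Cauchy surface $\Sigma$ of the globally hyperbolic region $\Lb^-$ and take $N:=\mathrm{int}\,D_{\Mb}(\Sigma)$, which is automatically open, causally convex, and has $\Sigma$ as a Cauchy surface. The inclusion $L^-\subset N$ follows by arguing that, thanks to causal convexity of $L^-$ and the Cauchy surface property of $\Sigma$ in $\Lb^-$, any inextendible causal curve in $\Mb$ through a point of $L^-$ must cross $\Sigma$. The inclusion $L^+\subset N$ is the crux: the hypothesis $L^+\subset D(L^-)$ ensures that every past-inextendible causal curve through $p\in L^+$ enters $L^-$, and restricting such a curve to the connected component of its intersection with $L^-$ containing the first entry produces a causal curve of $\Lb^-$ that must meet $\Sigma$; I would use causal convexity of $L^-$ and acausality of $\Sigma$ in $\Mb$ to rule out pathological curves that exit and re-enter $L^-$ without crossing $\Sigma$. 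This purely geometric lemma is the only step that lies outside algebraic manipulation.
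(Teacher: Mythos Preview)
Your proof is correct and follows essentially the same route as the paper: all three parts reduce, via functoriality~\eqref{eq:functorial} and naturality~\eqref{eq:naturality}, to the key identity $\tau^\pm\circ(\alpha_{\Mb;\Lb}\otimes\beta_{\Mb;\Lb})=\gamma_{\Mb;\Lb}\circ\chi_\Lb$ for $L\subset M^\pm$. The paper is terser in part~(c), asserting that $\gamma_{\Mb;\Lb^+}$ factors through $\gamma_{\Mb;\Lb^-}$ directly from the time-slice property and the hypothesis $L^+\subset D(L^-)$, without constructing the intermediate region $N$; your more detailed geometric argument is sound, though your worry about curves that ``exit and re-enter $L^-$'' is unnecessary, since causal convexity of $L^-$ forces the intersection of any causal curve with $L^-$ to be connected.
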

\begin{proof}
	(a) Note that $\hat{M}^\pm=M\setminus J^\mp(\hat{K})$ are subsets of $M^\pm$, 
	giving $(\alpha^\pm){}^{-1}\circ \hat{\alpha}^\pm = \alpha_{\Mb^\pm;\hat{\Mb}^\pm}$ by the rule for consecutive inclusions and the time-slice property. We may then calculate
	\begin{align}
	\kappa^\pm\circ(\alpha^\pm\otimes\beta^\pm)^{-1}\circ(\hat{\alpha}^\pm\otimes\hat{\beta}^\pm)
	&=\gamma^\pm\circ\chi^\pm\circ (\alpha_{\Mb^\pm;\hat{\Mb}^\pm}\otimes \beta_{\Mb^\pm;\hat{\Mb}^\pm}) = \gamma^\pm\circ \gamma_{\Mb^\pm;\hat{\Mb}^\pm}\circ\hat{\chi}^\pm = \hat{\gamma}^\pm\circ\hat{\chi}^\pm \notag\\
	&=\hat{\kappa}^\pm
	\end{align}
	using~\eqref{eq:naturality} and~\eqref{eq:functorial}. It follows that $\tau^\pm=\hat{\tau}^{\pm}$ and hence $\hat{\Theta}=\Theta$.
	
	(b,c) First suppose that $L\subset M^\pm$ is an open causally convex set.
	Then  
	\begin{equation} 
	\kappa^\pm\circ(\alpha_{\Mb^\pm;\Lb}\otimes\beta_{\Mb^\pm;\Lb}) = \gamma^\pm\circ \chi_{\Mb^\pm}\circ (\alpha_{\Mb^\pm;\Lb}\otimes\beta_{\Mb^\pm;\Lb}) = 
	\gamma^\pm \circ\gamma_{\Mb^\pm;\Lb}\circ\chi_{\Lb} =\gamma_{\Mb;\Lb}\circ\chi_{\Lb} ,
	\end{equation}
	in which we have used~\eqref{eq:naturality}. Then 
	the definition of $\tau^\pm$ gives
	\begin{align}\label{eq:useful}
	\tau^\pm\circ (\alpha_{\Mb;\Lb}\otimes\beta_{\Mb;\Lb}) &= \kappa^\pm\circ  (\alpha^\pm\otimes\beta^\pm)^{-1}\circ (\alpha_{\Mb;\Lb}\otimes\beta_{\Mb;\Lb}) 
	=\kappa^\pm\circ  (\alpha_{\Mb^\pm;\Lb}\otimes\beta_{\Mb^\pm;\Lb}) \notag \\
	&=\gamma_{\Mb;\Lb}\circ\chi_{\Lb},
	\end{align}
	which we now use to prove parts (b) and (c).
	
	For (b), we assume that $L\subset M^+\cap M^-$, so we have 
	\begin{equation}
	\tau^+\circ (\alpha_{\Mb;\Lb}\otimes\beta_{\Mb;\Lb})=\tau^-\circ (\alpha_{\Mb;\Lb}\otimes\beta_{\Mb;\Lb}),
	\end{equation}
	and hence
	\begin{equation}
	\Theta \circ (\alpha_{\Mb;\Lb}\otimes\beta_{\Mb;\Lb}) = (\alpha_{\Mb;\Lb}\otimes\beta_{\Mb;\Lb}),
	\end{equation}
	that is, $\Theta$ acts trivially on $\Ac(\Mb;L)\otimes \Bc(\Mb;L)$. 
	
	Finally, for part (c) we apply~\eqref{eq:useful} to $L^+\subset M^+$ and $L^-\subset M^-$ giving
	\begin{equation}
	\tau^+\circ (\alpha_{\Mb;\Lb^+}\otimes\beta_{\Mb;\Lb^+})= \gamma_{\Mb;\Lb^+}\circ\chi_{\Lb^+},\qquad
	\tau^-\circ (\alpha_{\Mb;\Lb^-}\otimes\beta_{\Mb;\Lb^-})= \gamma_{\Mb;\Lb^-}\circ\chi_{\Lb^-},
	\end{equation}
	the first of which asserts that $\tau^+\circ (\alpha_{\Mb;\Lb^+}\otimes\beta_{\Mb;\Lb^+})$ factors through $\gamma_{\Mb;\Lb^+}$, while the second implies that $(\tau^{-})^{-1}\circ \gamma_{\Mb;\Lb^-}$ factors through $\alpha_{\Mb;\Lb^-}\otimes\beta_{\Mb;\Lb^-}$. 
	As $\gamma_{\Mb;\Lb^+}$ factors via $\gamma_{\Mb;\Lb^-}$ due to
	the assumption $L^+\subset D(L^-)$ and the timeslice property, 
	the two observations combine to show that $\Theta\circ (\alpha_{\Mb;\Lb^+}\otimes\beta_{\Mb;\Lb^+})$ factors through $\alpha_{\Mb;\Lb^-}\otimes\beta_{\Mb;\Lb^-}$, as required.
\end{proof}

\section{Properties of the maps $\eta_\sigma$ and $\varepsilon_\sigma$}\label{appx:cpetc}

We prove the properties of $\varepsilon_\sigma$ asserted in Theorem~\ref{thm:induced}, 
which rest on the fact that $\Theta$ is a unit-preserving $*$-homomorphism and the definition of $\eta_\sigma$. No originality is claimed but we did not find arguments in the literature exactly
corresponding to our situation. See, however,~\cite[\S 10.5.13]{BaumWollen:1992} for related arguments and~\cite{Busch_etal:quantum_measurement} for discussion of complete positivity. Beginning with elementary properties, we compute
\begin{equation}
\varepsilon_\sigma(\II) = \eta_\sigma(\Theta(\II\otimes\II)) = \eta_\sigma(\II\otimes\II) = \II,
\end{equation}
and then note that $\eta_\sigma((A\otimes B)^*)=\eta_\sigma(A^*\otimes B^*)=\sigma(B^*)A^*=(\eta_\sigma(A\otimes B))^*$, from which it follows that $\varepsilon_\sigma(B^*)=\varepsilon_\sigma(B)^*$. 

Next, we show explicitly that $\eta_\sigma$ is completely positive. By definition, the positive elements of a $*$-algebra are finite convex combinations of elements of the form $A^*A$. Let $N\ge 1$ and consider an element $C\in M_N(\CC)\otimes (\Ac(\Mb)\otimes\Bc(\Mb))$ given as a finite sum
\begin{equation}
C=\sum_r M_r\otimes (A_r\otimes B_r)\,.
\end{equation}
To establish complete positivity, we will show that $X = (\id_N\otimes\eta_\sigma)(C^*C)$ is positive in $M_N(\CC)\otimes\Ac(\Mb)$ and start by computing
\begin{equation}
X = \sum_{r,s} (\id_N\otimes\eta_\sigma)(M_r^*M_s\otimes A_r^*A_s\otimes B_r^*B_s) = 
\sum_{r,s} \sigma(B_r^*B_s)  M_r^*M_s\otimes A_r^*A_s\,.
\end{equation}
As $\sigma$ is a state, $\sigma(B_r^*B_s)$ is a positive matrix, and may be decomposed as 
\begin{equation}
\sigma(B_r^*B_s) = \sum_i \overline{v_r^{(i)}}v_s^{(i)}
\end{equation}
for suitable mutually orthogonal vectors $v^{(i)}$,
whereupon we find 
\begin{equation}
X=\sum_{i} W_i^*W_i\,, \qquad\text{where}\quad
W_i = \sum_s v_s^{(i)} M_s\otimes A_s\,.
\end{equation}
This proves that $\eta_\sigma$ is completely positive. The same holds for $\varepsilon_\sigma$,
because, for any $C=\sum_r M_r\otimes B_r\in M_N(\CC)\otimes\Bc(\Mb)$, 
\begin{equation}
(\id_N\otimes\varepsilon_\sigma)(C^*C) = (\id_N\otimes\eta_\sigma)(D^*D) \ge 0,
\end{equation} 
where $D= \sum_r M_r\otimes \Theta(\II\otimes B_r)$, and using complete positivity of $\eta_\sigma$.

Finally, let $B\in\Bc(\Mb)$ and define $C=\Theta (\II\otimes B)\in\Ac(\Mb)\otimes\Bc(\Mb)$, which 
can be decomposed as
\begin{equation}
C = \sum_r A_r\otimes B_r\,. 
\end{equation}
Direct calculation gives
\begin{equation}
\varepsilon_\sigma(B^*B) - \varepsilon_\sigma(B)^*\varepsilon_\sigma(B) = 
\sum_{rs} \left(\sigma(B_r^*B_s)- \overline{\sigma(B_r)}\sigma(B_s)\right) A_r^* A_s\,,
\end{equation}
and as the factors in parentheses determine a positive matrix, due to the Cauchy--Schwarz inequality, the right-hand side is a positive operator by an analogous argument to that used above.

\section{Haag property}\label{appx:Haagduality}

We establish that the quantized free scalar field, as described in the text, has the Haag property stated for general models. We use the fact that
$\Ac(\Mb)$ may be identified, as a vector space, with the symmetric tensor vector space 
\begin{equation}
\Gamma_\odot(\Sol)=\bigoplus_{n=0}^\infty \Sol^{\odot n},
\end{equation}
where $\Sol$ is the space of smooth solutions to $P\Phi=0$ with spatially compact support on $\Mb$ and $\odot$ is the symmetric tensor product. See e.g.,~\cite{FewVer:dynloc2} for further details on this viewpoint.

 The symmetric tensor vector space has a natural number operator $N$ which multiplies by $n$ on the subspace $\Sol^{\odot n}$. Now suppose $f\in\CoinX{M}$ and consider the derivation $D_f:A\mapsto [\Phi(f),A]$ of $\Ac(\Mb)$. Our aim is to determine its kernel. Due to the commutation relation Q3, $D_f$ is a lowering operator for $N$,
 \begin{equation}
 N D_f  = D_f (N-\II)
 \end{equation}
 and so $\ker D_f$ is a direct sum of its kernels within each $\Sol^{\odot n}$. Note that $D_f$ acts on a typical element $E_P h$ of $\Sol$ ($h\in\CoinX{M}$) by 
 \begin{equation}
 D_f E_P h = [\Phi(f),\Phi(h)]=iE_P(f,h)\II = \delta_f \left(E_Ph \right)\II,
 \end{equation}
where 
 \begin{equation}
 \delta_f\phi:= i\int_M\dvol \,f\phi.
 \end{equation}
 
 Suppose therefore that $D_fA=0$ and $NA=nA$. Then $A$ may be identified with an element of $\Sol^{\odot n}$, and can be further identified with a linear map 
 $\varrho_A:(\Sol^{\otimes (n-1)})^*\to \Sol$. The image $V_A=\im\varrho_A$
 is a finite dimensional \emph{support subspace} canonically associated with $A$ and it may be shown that  
 $A\in V_A^{\odot n}$ (see~\cite{FewVer:dynloc2}, especially Appendix A). In a similar way, $D_fA\in \Sol^{\odot (n-1)}$ may be identified with $n\delta_f\circ\varrho_A\in (\Sol^{\odot (n-1)})^{**}$.
 
 The assumption that $D_f A =0$ now entails that $\delta_f$ vanishes on $V_A$,
 so $A\in (\ker\delta_f)^{\odot n}$. In general, we have 
 \begin{equation}
 \ker D_f = \Gamma_\odot (\ker\delta_f)\subset \Gamma_\odot(\Sol).
 \end{equation}
 
 Now suppose that $A\in \Ac(\Mb)$ obeys $D_f A=0$ for all $f$ supported in $K^\perp$. The previous argument shows that 
 \begin{equation}
 A\in \Gamma_\odot( \Sol_K), \qquad \Sol_K = \bigcap_{f\in\CoinX{K^\perp}} \ker \delta_f
 \end{equation}
 It is easily seen that $\Sol_K$ is precisely the space of solutions with support
 contained in $M\setminus K^\perp=J(K)$. By~\cite[Lem.~3.1(i)]{FewVer:dynloc2}, if $K\subset O$ where
 $O$ is any open causally convex subset of $M$ with at most finitely many (necessarily causally disjoint) components
 then $\Sol_K\subset E_P\CoinX{O}$. This proves that $A$ may be constructed from fields smeared with test functions supported in $O$. That is, $A\in\Ac(\Mb;O)$, so the Haag property holds.
  
 \section{Scattering morphisms for the free field model}\label{appx:rce}
 
 We derive the formula~\eqref{eq:rcegen_text} for the scattering morphism used in the text, and also establish the causal factorization property. To simplify notation, it is convenient to study the scattering morphism that describes the comparison of two quantised scalar field theories on $\Mb$, based on Green hyperbolic operators $P$ and $Q$ that agree outside a compact subset $K$. We write the theories corresponding to $P$ and $Q$ as $\Ac$ and $\Bc$ respectively with generators denoted $\Phi(f)$ and $\Psi(f)$. 
 For our application in the text, $P$ is what we have written there as $P\oplus Q$, while $Q$ is the operator $T$, so $\Ac$ corresponds to $\Ac\otimes\Bc$ and $\Bc$ to $\Cc$.  We also require
 some terminology; following, e.g.,~\cite{Baer:2015}, a closed set is called \emph{future compact}
 (resp., \emph{past compact}) if it has compact intersection with all sets of the form $J^+(p)$ (resp., $J^-(p)$) as $p$ varies in $M$, and \emph{spatially compact} if it is contained in a set of the form $J(K)$ where $K$ is compact. The spaces of smooth functions with supports that are future, past, or spatially compact are designated by subscripts $fc$, $pc$, $sc$. For example, $C^\infty_{sc,fc}(M)$ consists of smooth functions on $M$ whose supports are both spatially compact and future compact.

The regions $M^\pm=M\setminus J^\mp(K)$ are causally convex open sets and therefore we may induce Green hyperbolic operators $P^\pm$ and $Q^\pm$ on $C^\infty(M^\pm)$ that must in fact agree, $Q^\pm=P^\pm$. As $M^\pm$ also contain Cauchy surfaces, there is a chain of isomorphisms
\begin{equation}\label{eq:rce_chain}
\Ac(\Mb)\to \Ac(\Mb^+) \to \Bc(\Mb^+)\to \Bc(\Mb)\to \Bc(\Mb^-)\to \Ac(\Mb^-)\to \Ac(\Mb)
\end{equation}
in which the second and fifth arise because the algebras $\Ac(\Mb^\pm)$ and $\Bc(\Mb^\pm)$
coincide, owing to the agreement of $P$ and $Q$ on $M^\pm$, while the others are instances of the time-slice property. Composing from left to right, we obtain a single overall isomorphism $\Theta:\Ac(\Mb)\to\Ac(\Mb)$.
It is enough to describe the action of $\Theta$ on a typical generator $\Phi(f)$ of $\Ac(\Mb)$ with $\supp f\subset M^+$. Given this choice, the action of the first three isomorphisms in \eqref{eq:rce_chain} simply map $\Phi(f)$ to the corresponding generator $\Psi(f)$ of $\Bc(\Mb)$. If $E_Q f=\phi^++\phi^-$ is any partition with $\phi^{+/-}\in C^\infty_{\textit{sc},\pc/\fc}(M)$ and $\supp \phi^-\subset M^-$ then $Q\phi^-=-Q\phi^+$ has support that is spatially-, future- and past-compact and
is therefore compactly supported. Further, $E_Q^-Q\phi^-=\phi^-$ while $E_Q^+Q\phi^-=-E_Q^+Q\phi^+=-\phi^+$ by uniqueness of solutions to the inhomogeneous Green-hyperbolic equations with future-/\break past-compact support, so $E_Q f=\phi^-+\phi^+ = (E_Q^--E_Q^+)Q\phi^-$. By standard properties of Green-hyperbolic operators this implies that $f-Q\phi^-\in Q\CoinX{M}$
and the axiom Q3 gives $\Psi(f)=\Psi(Q\phi^-)$. Because $Q\phi^-$ is supported in $M^-$, the last three isomorphisms in~\eqref{eq:rce_chain} map $\Psi(Q\phi^-)$ to $\Phi(Q\phi^-)$, giving overall that
\begin{equation}
\Theta\Phi(f) = \Phi(Q \phi^-).
\end{equation}  
There is considerable freedom in the choice of $\phi^-$. For example, it may be chosen (still supported in $M^-$) so that $E_Q^-f = \phi^-+\phi^0$ for some $\phi^0\in\CoinX{M}$, whereupon $E_Q f=\phi^- +\phi^0 - E_Q^+f$. 
Then 
\begin{equation}
Q\phi^- = f - Q\phi^0 = f - (Q-P)\phi^0 - P\phi^0
\end{equation}
and as $(Q-P)E_Q^+f=0$ because $f$ is supported in $M^+$, and $(Q-P)\phi^-=0$ because $\phi^-$ is supported in $M^-$, we have
\begin{equation}
Q\phi^- = f - (Q-P)E_Qf - P\phi^0.
\end{equation}
Consequently, axiom Q3 gives
\begin{equation}\label{eq:rcegengen}
\Theta\Phi(f) = \Phi(f - (Q-P)E_Qf) = \Phi(f - (Q-P)E_Q^-f)
\end{equation}
for all $f\in\CoinX{M^+}$.  

An important property of the scattering morphism is that -- as we will now show -- it factorises when the support of $\rho$ falls into two causally orderable parts. Suppose that $\rho=\rho_1+\rho_2$ where $\rho_i\in\CoinX{K_i}$ and $K_1$ and $K_2$ are compact
sets so that $J^-(K_1)$ and $J^+(K_2)$ do not intersect. Set $K=K_1\cup K_2$. 
Then $M\setminus(J^-(K_1)\cup J^+(K_2))$ contains Cauchy surfaces of $\Mb$, relative to which $K_1$ lies in the past while $K_2$ lies in the future. If $K_1$ and $K_2$ are causally disjoint, there are also Cauchy surfaces of $\Mb$ giving the reverse ordering, of course.

\begin{proposition}\label{prop:GFfactor}
	Let $P$ be a partial differential operator and suppose that Green-hyperbolic operators $Q_i$ ($i=1,2$) agree with $P$ outside compact sets $K_i$ ($i=1,2$) with
	$J^-(K_1)\cap J^+(K_2)=\emptyset$. Let $K=K_1\cup K_2$ and suppose further that 
	\begin{equation}
	Q = Q_1 + Q_2 - P
	\end{equation}
	is also Green-hyperbolic. Then the Green functions $E_i^\pm$ of $Q_i$ and $E^\pm$ of $Q$ are related by 
	\begin{equation}
	(1-(Q-P)E^-)f = (1-(Q_1-P)E_1^-)(1-(Q_2-P)E_2^-)f  
	\end{equation}
	for all $f\in \CoinX{M\setminus J^-(K)}$, and similarly, 
	\begin{equation}
	(1-(Q-P)E^+)f = (1-(Q_2-P)E_2^+) (1-(Q_1-P)E_1^+)f  
	\end{equation}
	for all $f\in \CoinX{M\setminus J^+(K)}$.
\end{proposition}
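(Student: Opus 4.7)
The plan is to focus on the advanced identity, since the retarded version will follow by a symmetric argument. My first move is to reformulate both sides using the elementary observation that whenever $Q'$ agrees with $P$ outside a compact set and $g\in\CoinX{M}$, one has
$(1 - (Q'-P)E'^-)g = g - Q'E'^-g + PE'^-g = PE'^- g$.
Applying this to the left-hand side gives $PE^- f$, while iterating it on the right-hand side (noting that $(1 - (Q_2-P)E_2^-)f = PE_2^- f$ is compactly supported, so $E_1^-$ is defined on it) yields $PE_1^- PE_2^- f$. Thus the proposition reduces to the identity $PE^- f = PE_1^- PE_2^- f$, which I will deduce from the stronger statement $E^- f = E_1^- PE_2^- f$.

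Setting $\phi := E_2^- f$ and $\psi := E_1^- P\phi$, the strategy is to show $\psi = E^- f$ by invoking uniqueness of future-compact solutions of $Q\eta = f$. This requires verifying (i) that $\psi$ has future-compact support and (ii) that $Q\psi = f$. For (i), I compute $P\phi = Q_2\phi - (Q_2-P)\phi = f - (Q_2-P)\phi$; since $(Q_2 - P)$ has coefficients supported in $K_2$, this is compactly supported in $\supp f \cup K_2$, so $\psi$ is future-compact with support in $J^-(\supp f \cup K_2)$. For (ii), using $Q = Q_1 + Q_2 - P$, I get $Q\psi = Q_1\psi + (Q_2 - P)\psi = P\phi + (Q_2-P)\psi$, whereas $f = Q_2\phi = P\phi + (Q_2-P)\phi$. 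Hence $Q\psi = f$ is equivalent to $(Q_2-P)(\psi-\phi)=0$, i.e., $\psi$ and $\phi$ agree on $K_2$.

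The principal obstacle, and the step where the hypothesis is used, is establishing this agreement on $K_2$. Setting $\chi := \psi - \phi$, I compute $Q_1\chi = P\phi - Q_1\phi = -(Q_1 - P)\phi$, which is compactly supported in $K_1 \cap J^-(\supp f)$. Since $\chi$ is a difference of future-compact functions it is itself future-compact, and uniqueness of future-compact solutions to the Green-hyperbolic equation for $Q_1$ gives $\chi = -E_1^-(Q_1-P)\phi$, whose support lies in $J^-(K_1)$. The hypothesis $J^-(K_1)\cap J^+(K_2)=\emptyset$ then yields $K_2 \cap J^-(K_1) \subseteq J^+(K_2)\cap J^-(K_1) = \emptyset$, so $\chi$ vanishes on $K_2$, closing the argument.

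For the retarded identity, the same strategy applies with advanced Green operators replaced by retarded ones and the roles of $K_1$ and $K_2$ interchanged; the hypothesis is precisely symmetric under this combined swap, so the relabelled condition $J^+(K_2)\cap J^-(K_1)=\emptyset$ that drives the argument is the same one we are given. The remaining technical care throughout consists in tracking which Green operators act on compactly supported versus future-compact inputs so that the relevant uniqueness statements apply; this is a routine consequence of the support constraints $\supp f \subseteq M\setminus J^\mp(K)$ and the localisations of $Q_i - P$ in $K_i$.
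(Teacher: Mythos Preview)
Your proof is correct and establishes the same core identity as the paper, namely $E^-f = E_1^-(1-(Q_2-P)E_2^-)f$ (which is precisely your $E^-f = E_1^-PE_2^-f$ via your opening observation $(1-(Q'-P)E'^-)g = PE'^-g$). The paper reaches this identity from the other direction---starting with $\phi=E^-f$ and rewriting via a Born-type expansion---whereas you define the candidate $\psi=E_1^-PE_2^-f$ and verify $Q\psi=f$ by uniqueness; but the decisive step in both arguments is identical: one shows that a certain correction term has support in $J^-(K_1)$, which is disjoint from $K_2$ by hypothesis, so that $(Q_2-P)$ annihilates it.
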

\begin{proof}
	We prove the first of the above statements; the second is proved by analogy.
	Note first that $Q$ agrees with $Q_1$ outside $K_2$, with $Q_2$ outside $K_1$, and with $P$ outside $K$.
	Take $f\in\CoinX{M\setminus J^-(K)}$ and consider the equation $Q\phi=f$, which has a unique solution with future-compact support, namely $\phi=E^-f$. 
	But also, $Q_i\phi=f-(Q-Q_i)\phi$, so 
	\begin{equation}\label{eq:QsP0}
	E^- f = \phi = E_i^- f-E_i^-(Q-Q_i)\phi.
	\end{equation}
	In particular (taking $i=2$), the second term on the right-hand side is supported in $J^-(K_1)$, and therefore $E^-f$ and $E_2^-f$ must agree in $M\setminus J^-(K_1)$. As $Q_2$ and $P$ agree outside $K_2\subset M\setminus J^-(K_1)$, it follows that 
	\begin{equation}\label{eq:QsP1}
	(Q-Q_1)E^-f=(Q_2-P)E^-f=(Q_2-P)E_2^-f=(Q-Q_1)E_2^-f. 
	\end{equation} 
	Using this identity in \eqref{eq:QsP0} with $i=1$ gives 
	\begin{equation}
	E^-f = E_1^-f - E_1^- (Q-Q_1)E^-f = E_1^-f - E_1^- (Q_2-P)E_2^-f = E_1^-(1-(Q_2-P)E_2^-)f
	\end{equation}
	and therefore one also has
	\begin{equation}
	(Q_1-P)E_1^-(1-(Q_2-P)E_2^-)f= (Q_1-P)E^-f.
	\end{equation}  
	Putting these results together, the calculation 
	\begin{align}
	(1-(Q_1-P)E_1^-)(1-(Q_2-P)E_2^-)f &= f-(Q_1-P)E_1^-(1-(Q_2-P)E_2^-)f - (Q_2-P)E_2^-f \notag\\
	&= f - (Q_1-P)E^-f -(Q_2-P)E^-f \notag\\
	&= (1-(Q-P)E^-)f,
	\end{align}
	proves the required statement.
\end{proof}  
Combining this result with~\eqref{eq:rcegengen}, we obtain immediately: 
\begin{corollary}\label{cor:Bogoliubov}
	Under the hypotheses of Prop.~\ref{prop:GFfactor}, and assuming that $P$ is also Green-hyperbolic, the scattering morphisms $\Theta$ and $\Theta_i$ comparing the quantized $Q$-dynamics (resp., $Q_i$-dynamics) to that of $P$ are related by the causal factorisation formula 
	\begin{equation}\label{eq:Bogoliubov2}
	\Theta = \Theta_1\circ\Theta_2.
	\end{equation}
\end{corollary}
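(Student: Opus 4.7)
The strategy is to verify the identity $\Theta = \Theta_1 \circ \Theta_2$ on a generating set of $\Ac(\Mb)$ and reduce everything to the Green function identity of Proposition~\ref{prop:GFfactor} via formula~\eqref{eq:rcegengen}. By the time-slice property, $\Ac(\Mb)$ is generated (along with $\II$) by the fields $\Phi(f)$ with $f\in\CoinX{M^+}$, where $M^+ = M\setminus J^-(K)$. Since $\Theta$ and $\Theta_i$ are algebra homomorphisms, it suffices to check that $\Theta\Phi(f) = (\Theta_1\circ\Theta_2)\Phi(f)$ on such generators.

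First I would apply~\eqref{eq:rcegengen} to compute $\Theta_2\Phi(f)$. For this one needs $f\in\CoinX{M\setminus J^-(K_2)}$, which holds because $J^-(K_2)\subset J^-(K)$, so $M^+\subset M\setminus J^-(K_2)$. This yields
\begin{equation*}
\Theta_2\Phi(f) = \Phi\bigl(g\bigr), \qquad g := f - (Q_2-P)E_2^- f.
\end{equation*}
Since $Q_2$ agrees with $P$ outside $K_2$, the term $(Q_2-P)E_2^- f$ is supported in $K_2\cap J^-(\supp f)$, hence compactly supported, so $g\in\CoinX{M}$.

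Next I would check that $g$ lies in $\CoinX{M\setminus J^-(K_1)}$, so that~\eqref{eq:rcegengen} may again be applied to compute $\Theta_1\Phi(g)$. The support of $g$ is contained in $\supp f \cup K_2$. By hypothesis $J^-(K_1)\cap J^+(K_2)=\emptyset$, and since $K_2\subset J^+(K_2)$, we have $K_2\cap J^-(K_1)=\emptyset$; also $\supp f\subset M^+\subset M\setminus J^-(K_1)$. Therefore $g\in\CoinX{M\setminus J^-(K_1)}$, and~\eqref{eq:rcegengen} gives
\begin{equation*}
(\Theta_1\circ\Theta_2)\Phi(f) = \Phi\bigl((1 - (Q_1-P)E_1^-)(1-(Q_2-P)E_2^-)f\bigr).
\end{equation*}

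Finally, I would invoke the first identity of Proposition~\ref{prop:GFfactor}, which applies because $f\in\CoinX{M^+}=\CoinX{M\setminus J^-(K)}$, to conclude that the argument of $\Phi$ on the right equals $(1-(Q-P)E^-)f$. Comparing with~\eqref{eq:rcegengen} applied to $\Theta$ itself (valid for the same $f$), we obtain
\begin{equation*}
(\Theta_1\circ\Theta_2)\Phi(f) = \Phi\bigl((1-(Q-P)E^-)f\bigr) = \Theta\Phi(f),
\end{equation*}
as required. The main subtlety — essentially the only thing requiring care beyond bookkeeping — is the support check that permits the second application of~\eqref{eq:rcegengen}; this is precisely where the causal ordering hypothesis $J^-(K_1)\cap J^+(K_2)=\emptyset$ enters, ensuring that after scattering by $Q_2$ the modified test function still lives in the appropriate ``out'' region for $Q_1$.
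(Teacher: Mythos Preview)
Your proof is correct and follows exactly the route the paper indicates: the paper states the corollary as an immediate consequence of combining Proposition~\ref{prop:GFfactor} with formula~\eqref{eq:rcegengen}, and you have simply spelled out that combination on generators. The support check you flag as the one subtlety is indeed the only point requiring care, and your verification of it is accurate.
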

Clearly, if $K_1$ and $K_2$ are causally disjoint, then we also have $\Theta=\Theta_2\circ\Theta_1$: the scattering morphisms $\Theta_i$ commute. 

Let us now apply these general results to our probe-system model. 
The uncoupled combination has dynamics given by $P\oplus Q$, while the coupled system is described by $T$. Adapting~\eqref{eq:rcegengen}, the scattering morphism $\Theta$ acts on the fields $\Xi_0$ by
\begin{equation}\label{eq:rcegen}
\Theta\Xi_0(F) = \Xi_0(F - (T-P\oplus Q)E_T F) = \Xi_0(F - \tilde{R} E_T^- F)
\end{equation}
for all $F\in\CoinX{M^+;\CC^2}$, where
\begin{equation}
\tilde{R} = \begin{pmatrix} 0 & R \\ R & 0\end{pmatrix}.
\end{equation}
This establishes the formula~\eqref{eq:rcegen_text}.
Now suppose that there are now two probe fields $\Psi_1$ and $\Psi_2$, coupled to $\Phi$ (but not each other) with compactly supported functions $\rho_i$ in regions $K_i$ with $J^-(K_1)\cap J^+(K_2)=\emptyset$. Writing the corresponding free field equations by $Q_1$ and $Q_2$, the two probes together are described by $Q=Q_1\oplus Q_2$, which is also Green-hyperbolic~\cite{Baer:2015}, while the coupling $R$ is now $RF=\rho_1 F_1 + \rho_2 F_2$. 
The causal factorisation formula~\eqref{eq:Bogoliubov2} gives
\begin{equation}
\hat{\Theta}= \hat{\Theta}_1 \circ\hat{\Theta}_2,
\end{equation}
where $\hat{\Theta}_i$ is the scattering morphism for the dynamics given by
$\hat{T}_i=P\oplus (Q_1\oplus Q_2) + \tilde{R}_i$. Identifying the overall quantized theory with $\Ac(\Mb)\otimes\Bc_1(\Mb)\otimes\Bc_2(\Mb)$,  we have 
\begin{equation}
\hat{\Theta}_1 = \Theta_{1}\otimes_3\id_{\Bc_2(\Mb)} ,  \qquad
\hat{\Theta}_2 = \Theta_{2}\otimes_2\id_{\Bc_1(\Mb)},
\end{equation}
where the $\Theta_i$ are the scattering morphisms for the quantized dynamics of
$T_i = P\oplus Q_i + \tilde{R}_i$ relative to $P\oplus Q_i$. 
This establishes the causal factorisation formula~\eqref{eq:Bogoliubov} for system-probe models of this type.  
 
{\small

}
\end{document}